\documentclass[a4paper,USenglish,cleveref]{lipics-v2021}

\usepackage{amsmath}
\usepackage{amssymb}
\usepackage{amsthm}
\usepackage{booktabs}
\usepackage{array}
\usepackage{xcolor}
\usepackage{subcaption}
\usepackage{prettyref}
\usepackage{wrapfig}
\usepackage{url}
\usepackage{hyperref}
\usepackage{tikz}
\usetikzlibrary{calc,arrows.meta}
\title{ParlayMarket: Automated Market Making for Parlay-Style Joint Contracts}
\titlerunning{ParlayMarket}

\author{Ranvir Rana}
  {Kaleidoscope Blockchain Inc., USA}
  {}
  {}
  {}

\author{Viraj Nadkarni}
  {Princeton University, Princeton, NJ, USA}
  {}
  {}
  {}

\author{Niusha Moshrefi}
  {Princeton University, Princeton, NJ, USA}
  {}
  {}
  {}

\author{Pramod Viswanath}
  {Princeton University, Princeton, NJ, USA}
  {}
  {}
  {}

\authorrunning{R. Rana, V. Nadkarni, N. Moshrefi, and P. Viswanath}
\Copyright{Ranvir Rana, Viraj Nadkarni, Niusha Moshrefi, and Pramod Viswanath}

\ccsdesc[500]{Theory of computation~Market equilibria}
\ccsdesc[300]{Theory of computation~Online learning theory}
\ccsdesc[300]{Applied computing~Economics}

\keywords{Prediction markets, automated market makers, parlays, market scoring rules, online learning}

\nolinenumbers

\EventEditors{Aggelos Kiayias and Maria Kyropoulou}
\EventNoEds{2}
\EventLongTitle{8th Conference on Advances in Financial Technologies (AFT 2026)}
\EventShortTitle{AFT 2026}
\EventAcronym{AFT}
\EventYear{2026}
\EventDate{October 6--9, 2026}
\EventLocation{London, UK}
\EventLogo{}
\SeriesVolume{395}
\ArticleNo{19}

\begin{document}
\maketitle


\begin{abstract}
Prediction markets are powerful mechanisms for information aggregation, but existing designs are optimized for single-event contracts. Traders frequently express beliefs about joint outcomes - sports parlays, conditional forecasts, multi-scenario financial bets. Current platforms either prohibit such trades or rely on ad hoc mechanisms that ignore correlation structure, resulting in inefficient prices and fragmented liquidity. We introduce \textbf{\textit{ParlayMarket}}, an automated market-maker for parlay-style joint contracts. The mechanism maintains a shared pairwise exponential-family belief state, so all base and parlay prices are marginals of one coherent distribution. This compresses the $2^M$ outcome space into $O(M^2)$ sufficient statistics and allows one liquidity pool to support an exponentially large family. 
Our main result characterizes the resulting learning and loss dynamics. Under repeated
trading, prices converge to the best pairwise approximation of the true joint distribution.
The induced expected market-maker loss grows at most quadratically in the number of base events,
rather than exponentially in the number of listed parlays; moreover, this quadratic dependence
is worst-case optimal for dense pairwise dependence, since there are $O(M^2)$ independent
correlation directions to learn. Parlay trades are essential to this guarantee: they provide
direct constraints on joint outcomes and reduce steady-state error relative to learning from
marginal trades alone. Experiments on synthetic correlated markets and historical Kalshi
combo data confirm the predicted scaling and show that the mechanism remains effective in
realistic market-making settings. Our results demonstrate that combinatorial expressiveness does not require combinatorial capital.

\end{abstract}


\section{Introduction}
\label{sec:intro}

\subparagraph{Two uses of knowledge in a society} Knowledge in an economy can be used in two ways---\emph{factually} and \emph{counterfactually}. Factual knowledge concerns statements such as ``the price of oil will be greater than \$100 on March 29'' or ``the election will be won by party $P$.'' Counterfactual knowledge, by contrast, concerns statements such as ``if the price of oil is greater than \$100 on March 29, then the price of a banana will be greater than \$2 on April 30'' or ``if the election is won by party $P$, then the probability of a recession will increase.'' Prediction markets have proven to be a powerful mechanism for aggregating and eliciting factual knowledge. The prices quoted by a prediction market can serve as a public good, allowing uninformed members of the economy to peer into the future through a market-implied best guess and plan accordingly.

\subparagraph{The liquidity problem} Despite this promise, prediction-market liquidity is highly uneven. Existing markets tend to have substantial liquidity in sensational or gambling-oriented domains such as sports, while economically useful markets, including public policy, local politics, and macroeconomic indicators, are often woefully illiquid \cite{freesystems2024truth}. Automated market makers (AMMs) are particularly well suited to such illiquid markets because they can provide continuous liquidity without requiring a counterparty \cite{hanson2002lmsr,hanson2003comb}. However, conventional AMMs incur losses on their locked liquidity in exchange for eliciting information. In this sense, subsidies are required to induce traders to reveal factual knowledge.

\subparagraph{The joint events problem} A further challenge arises when the information of interest is not merely factual, but counterfactual. Prediction markets aggregate dispersed information into prices, but practical automated market makers are primarily designed for single-event contracts. In particular, bounded-loss pricing for single markets is typically achieved through Hanson's logarithmic market scoring rule (LMSR) \cite{hanson2002lmsr,hanson2003comb}. In many applications, however, traders reason about joint outcomes: sports parlays, conditional forecasts across related events, or multi-scenario financial bets. The probability of such a contract depends not only on the marginal probabilities of its constituent events, but also on the dependence structure among them.

\subparagraph{Computational complexity} Supporting such joint contracts introduces a fundamental challenge. With $M$ binary events, the joint outcome space contains $2^M$ possible realizations. A naive combinatorial market maker must maintain prices and risk exposure across all of these outcomes, leading to worst-case loss and computational requirements that scale exponentially in $M$. This renders direct approaches intractable beyond very small systems. The challenge is well known in combinatorial prediction markets: allowing direct trading over joint outcome spaces provides expressive contracts, but typically incurs exponential state and inference costs \cite{hanson2003comb,chen2008bn}. Existing practical systems therefore either avoid parlay-style contracts or handle them through ad hoc mechanisms such as request-for-quote workflows, which fragment liquidity and need not maintain a coherent joint distribution.

\subparagraph{Our contributions} This paper shows that \emph{combinatorial expressiveness does not require combinatorial capital}. We introduce \textit{ParlayMarket}, an automated market maker that maintains a shared probabilistic belief over all outcomes and derives every base and parlay price as a marginal of this belief. The key design choice is to restrict the belief state to a pairwise exponential family. Rather than tracking one state variable per joint outcome, the mechanism maintains only
\[
M+\binom{M}{2}=O(M^2)
\]
sufficient statistics: one marginal parameter per event and one interaction parameter per pair. This is the smallest generic exponential-family representation capable of encoding arbitrary pairwise dependence: independence cannot represent correlations, while an unrestricted joint model is exponential. Statistically, this model is also the maximum-entropy distribution matching the learned first and second moments, and therefore does not introduce higher-order structure unsupported by the observed trade flow \cite{wainwright2008graphical}.

Our approach is related in spirit to inference-based views of prediction markets, where prices are treated as marginals of an underlying probabilistic model \cite{frongillo2012stochastic,chen2008bn,laskey2012graphical}. The key difference is that ParlayMarket learns the dependence structure online from endogenous order flow rather than imposing a fixed graphical structure ex ante. Trades in both base markets and parlay markets are interpreted as observations about the underlying joint distribution. After each trade, the mechanism updates its shared parameters, allowing information from one contract to propagate immediately to related contracts. This couples liquidity and learning across the entire contract family rather than maintaining separate books for each parlay.

We show that this design satisfies three central objectives. First, it pools liquidity across markets by pricing all contracts from a shared state, avoiding fragmentation. Second, it achieves bounded loss through an LMSR-style cost function defined over this shared representation, with loss that grows at most quadratically in $M$, rather than exponentially in the number of joint outcomes. Moreover, this quadratic dependence is worst-case optimal for dense pairwise dependence. Third, it aggregates information efficiently: as trades accumulate, the model converges to the best pairwise approximation of the true joint distribution under the observed data. This does not claim that higher-order dependence is absent; rather, it identifies the pairwise family as the practical frontier at which global coherence, online learnability, and non-combinatorial loss can all be achieved simultaneously.

Conceptually, ParlayMarket transforms the market into an online inference system. Empirically, controlled simulations confirm the predicted polynomial-versus-exponential loss separation, as shown in Figure~\ref{fig:main}, and a replay on historical Kalshi combo data shows that the mechanism remains effective as a practical correlation-aware quoting engine. Together, these results show that parlay-style prediction markets can be supported by an automated market maker
without full joint enumeration or separate capitalization for every combination.

\begin{figure}[t]
  \centering
  \begin{subfigure}[t]{0.49\columnwidth}
    \includegraphics[width=\columnwidth]{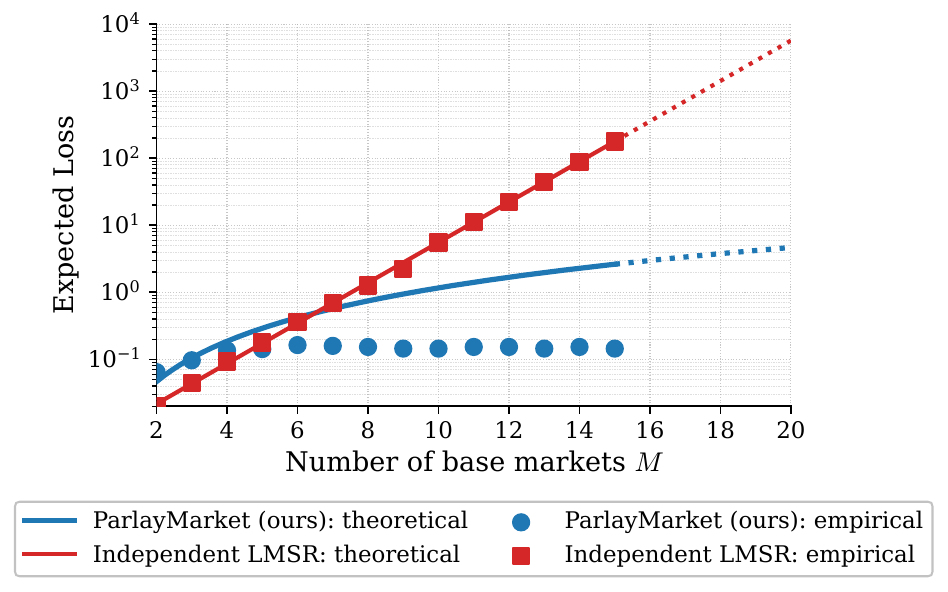}
    \caption{Expected Loss}
    \label{fig:main-mean}
  \end{subfigure}
  \hfill
  \begin{subfigure}[t]{0.49\columnwidth}
    \includegraphics[width=\columnwidth]{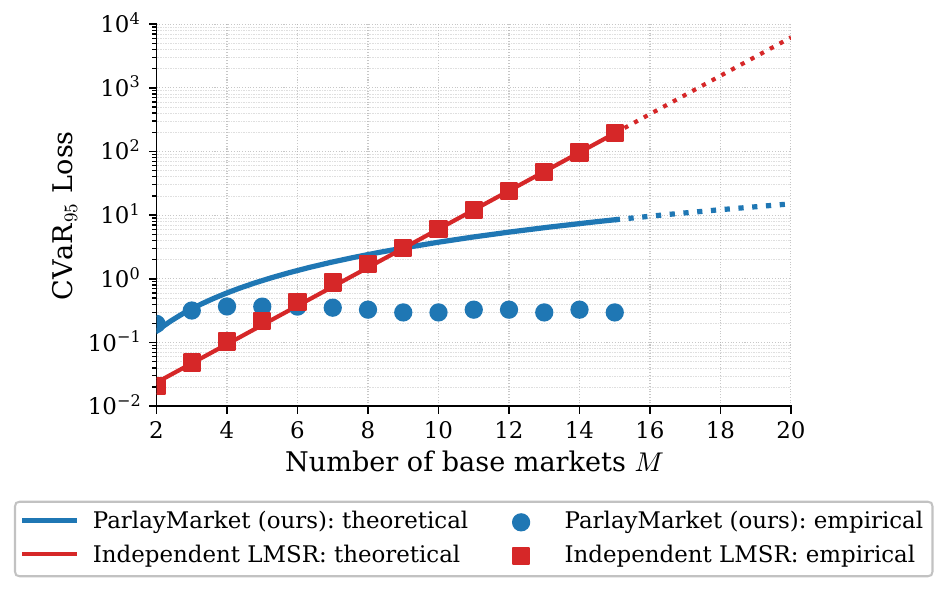}
    \caption{$\mathrm{CVaR}_{95}$ Loss}
    \label{fig:main-max}
  \end{subfigure}
  \caption{ParlayMarket enables $2^M$ markets with $O(M^2)$ capital. Theoretical curves show the asymptotic loss scaling implied by the analysis in Section \ref{sec:correlation-mm}. Empirical curves are obtained from controlled simulation of correlated binary markets under the Gaussian score model described later in Sections \ref{sec:objectives} and \ref{sec:evaluation}, and confirm the same polynomial-vs.-exponential separation in practice.}
  \label{fig:main}
\end{figure}

\subparagraph{Paper Overview.}
The remainder of the paper is organized as follows. In Section \ref{sec:background}, we review background on prediction markets, combinatorial and parlay markets, and probabilistic models of dependence, and position our work relative to prior approaches. In Section~\ref{sec:objectives}, we formalize the problem and define the design objectives for parlay-capable market makers. Section~\ref{sec:baseline} analyzes existing approaches and shows that they fail to simultaneously satisfy these objectives due to fragmented liquidity and combinatorial loss. Section~\ref{sec:correlation-mm} introduces ParlayMarket and describes its pricing and update mechanisms. We then present the theoretical results on convergence, loss bounds, and information aggregation. Finally, Section~\ref{sec:evaluation} evaluates the mechanism empirically, and Section~\ref{sec:conclusion} discusses implications and future directions.

\section{Background and Related Work} 
\label{sec:background}

\subparagraph{Cost-function market makers}
Prediction markets are commonly implemented using cost-function market makers, most notably Hanson's logarithmic market scoring rule (LMSR)~\cite{hanson2003comb,hanson2007log}. LMSR provides continuous liquidity, proper-scoring-rule prices, and bounded worst-case loss. Chen and Pennock~\cite{chen2008complexity} interpret LMSR as a utility-based market maker, while Abernethy et al.~\cite{abernethy2013market} connect market scoring rules to convex optimization, online learning, and Bregman divergence minimization. Related work on exponential-family markets~\cite{abernethy2014expfam} shows how bounded-loss market makers can be understood as maintaining beliefs in a parametric family. CFMMs provide a parallel AMM tradition in decentralized finance~\cite{angeris2020cfmm,milionis2022lvr}, but these mechanisms do not by themselves enforce probabilistic consistency across related event
contracts.

\subparagraph{Combinatorial and Parlay Markets}
Combinatorial prediction markets allow trading over joint outcome spaces~\cite{hanson2003comb}, but exact implementations require reasoning over \(2^M\) outcomes and become intractable for large \(M\). Prior work therefore restricts the security space or market geometry to obtain tractable loss bounds~\cite{hossain2012geometric}, or studies settings where simple trader interfaces fail to recover coherent joint beliefs~\cite{othman2010simple}. In practice, platforms such as Polymarket and Kalshi~\cite{polymarket,kalshi} rarely implement full combinatorial AMMs; parlay-like products are typically handled through RFQ workflows or separate markets, which fragment liquidity and need not maintain a single coherent joint distribution.

\subparagraph{Graphical models and inference-based markets}
Graphical models offer a natural way to represent dependence compactly. Pairwise Markov random fields, including Ising models, represent binary joint distributions using only first-order and pairwise interaction terms and correspond to maximum-entropy distributions matching specified marginals and pairwise moments~\cite{wainwright2008graphical}. Learning such models has been studied extensively~\cite{ravikumar2010ising,bresler2015ising, vuffray2016screening}, and online variants connect naturally to stochastic optimization and online convex programming~\cite{zinkevich2003ogd,hazan2019oco}. Prediction markets have also been interpreted as probabilistic inference systems~\cite{frongillo2012stochastic, chen2010eliciting}; Bayesian-network and graphical-model market makers make this connection explicit~\cite{chen2008bn,laskey2012graphical}. Unlike those mechanisms, \textit{ParlayMarket} learns the dependence structure endogenously from singleton and parlay order flow while using one shared state to quote all contracts coherently.

\section{Model and Design Objectives}
\label{sec:objectives}

We formalize automated market making for correlated prediction markets as a joint mechanism design and statistical inference problem. The market maker interacts with a stochastic stream of trades that reveal partial information about an underlying joint distribution, and must simultaneously provide liquidity, control risk, and learn the dependence structure.

\subsection{Market Setting}

We consider a collection of $M$ binary events $X_1, \dots, X_M \in \{0,1\}$, and denote a joint outcome by $x \in \{0,1\}^M$. The uncertainty over outcomes is governed by an unknown ground-truth distribution $P^{*}$ over $\{0,1\}^M$, which captures both marginal probabilities and dependencies among events.

\subparagraph{Contracts.}
The market supports trading in a family of payoff functions indexed by subsets $S \subseteq \{1,\dots,M\}$. Each contract corresponds to a Boolean payoff

\begin{equation}
    f_S(x) = \prod_{i \in S} x_i
    \label{eq:payoff}
\end{equation}
which pays $1$ if all events in $S$ occur and $0$ otherwise. In particular, singleton sets $S=\{i\}$ recover base contracts, while larger subsets correspond to parlay contracts. Let $\mathcal{F}$ denote the set of all supported contracts.

\subparagraph{Pricing and coherence.}
At any time, the market maker maintains a parametric distribution $P_{\boldsymbol{\varphi}}$ over $\{0,1\}^M$, which induces prices for all contracts via expectations:

\begin{equation}
    p_S(\boldsymbol{\varphi}) = \mathbb{E}_{x \sim P_{\boldsymbol{\varphi}}}[f_S(x)]
    \label{eq:core_price}
\end{equation}
This ensures probabilistic coherence, in the sense that all prices are consistent with a single joint distribution. In particular, marginal and joint prices satisfy the usual consistency constraints implied by $P_{\boldsymbol{\varphi}}$.

\subsection{Trader/Information Model}
We model the market as an online information aggregation process in which traders arrive sequentially and trade based on private signals about the underlying outcome distribution.

\subparagraph{Underlying process}

The underlying uncertainty in each base market is driven by a correlated Gaussian score
process. Specifically, there are $M$ base markets where market $i$ resolves \textsc{Yes}
if an underlying score $S_i(T)$ exceeds threshold $K_i$ at expiry $T$. The scores follow
correlated arithmetic Brownian motion
\begin{equation}
      dS_i(t) = \sigma_i\, dW_i(t), \qquad \mathrm{Cov}(dW_i, dW_j) = \rho_{ij}\, dt,
\end{equation}
so that $S_T \mid S_t \sim \mathcal{N}(S_t, \Sigma)$ with $\Sigma_{ij} = \rho_{ij}\sigma_i\sigma_j(T-t)$.
The binary outcome of market $i$ is $X_i = \mathbf{1}[S_i(T) > K_i]$.

The Gaussian score model is a natural choice for several reasons. First, it provides a
tractable closed-form expression for the conditional probability of each outcome given
the current score, enabling informed traders to compute exact signals. Second, the
pairwise correlation structure $\rho_{ij}$ maps cleanly onto the interaction parameters
$W_{ij}$ of the Ising approximation used by ParlayMarket, making the theoretical analysis
transparent. Third, it is a standard model for continuous-time prediction markets and
sports scoring processes, and admits straightforward emulation from real-world data by
calibrating $(\sigma_i, \rho_{ij}, K_i)$ to observed outcome frequencies and co-occurrences \cite{robinson2024pmamm}. In particular, it captures how the volatility in the likelihood of an event increases as we move closer to when the outcome is going to be revealed.

\subparagraph{Trader types.}
At each time step, a trader arrives and interacts with a single contract. \textit{Informed traders} observe the true underlying score process
$S_t$ and target the exact conditional probability. For base market $i$ they submit
$\hat{p}_i = \Phi\!\left(\frac{S^i_t - K_i}{\sigma_i\sqrt{T-t}}\right)$, and for parlay
$\mathcal{S}$ they submit the exact multivariate normal orthant probability
$P\!\left(\bigcap_{i\in\mathcal{S}}\{S_i(T)>K_i\} \mid S_t\right)$. \textit{Noise traders} observe a noisy signal
$\tilde{S}^i_t = S^i_t + \varepsilon^i$ with
$\varepsilon \sim \mathcal{N}(0, \mathrm{diag}(\sigma^2_i(T-t)))$
independent of the path, and target the resulting noisy probability estimate.
The noise-trader fraction is $\alpha \in [0,1]$.

\subparagraph{Information Structure.}
Each trader has access only to a local signal about a bounded number of events - there
is no global information. Informed traders observe the score process $S_t \in \mathbb{R}^M$
for all $M$ base markets, but this is a signal about the current conditional distribution
only; no trader has access to the full future path or to the joint distribution $P^*$
directly. PM itself must learn $P^*$ online from the sequence of incoming trades.

\subsection{Market Interface}
The market maker exposes a pricing and trading interface over the contract set $\mathcal{F}$. Trades are executed via a cost function $C_S(\cdot;\theta)$, which determines the payment required to adjust the position in contract $S$. Following each trade, the market maker updates its internal parameters $\theta$, which in turn updates prices across all contracts.

\subparagraph{Interaction model.}
Traders arrive sequentially. At each round $t$, one market $m_t$ is selected uniformly
at random from the $M + P$ markets, and a single trader arrives. With probability
$1 - \alpha$ the arriving trader is informed; with probability $\alpha$ they are a noise
trader. Competition is fair at each step: every arriving trader observes the current PM
price and decides whether and how much to trade before the next trader arrives.

\subsection{Design Objectives}

We formalize the requirements for a market maker supporting correlated contracts.


\subparagraph{Property 1: Liquidity Consistency.}
The market maker should provide comparable liquidity across all contracts without requiring independent capital allocation.

Let $I_m(x; \phi)$ denote the price impact of a trade $x$ in market $m$. Then there should exist constants $0 < c_1 \le c_2 < \infty$ such that for all sufficiently small admissible trade sizes $x$, such that there is non-zero price impact across all markets,

\begin{equation}
    c_1 \le \frac{I_m(x;\phi)}{I_{m'}(x;\phi)} \le c_2
\qquad \forall m,m' \in \mathcal{M}
\end{equation}
That is, price impact should remain comparable across all markets, including both base and parlay contracts, so that parlays are not disproportionately illiquid.

\subparagraph{Property 2: Controlled loss.}

 The market maker should control expected aggregate loss and tail risk under the induced joint process of trades and outcomes.

Let $L_T$ denote cumulative realized AMM loss over all executed trades up to horizon $T$.
A satisfactory mechanism should control both expected aggregate loss and tail risk under the induced joint process of trades and outcomes.

First, expected loss should scale subexponentially in the number of base markets, ideally linearly:
\begin{equation}
    \mathbb E[L_T] = o(2^M),
\qquad
\text{ideally } \mathbb E[L_T] = O(M)
\end{equation}

Second, tail loss should also remain controlled. 
For a confidence level $\alpha$ such as $0.95$, define
\begin{equation}
    \mathrm{VaR}_{\alpha}(L_T)
:=
\inf\{ \ell : \Pr(L_T \le \ell) \ge \alpha \}    
\end{equation}

Then the mechanism should satisfy
\begin{equation}
    \mathrm{VaR}_{0.95}(L_T) = o(2^M),
\qquad
\text{ideally } \mathrm{VaR}_{0.95}(L_T) = O(M)
\end{equation}
Optionally, the same requirement can be imposed on $\mathrm{CVaR}_{0.95}(L_T)$. 

\subparagraph{Property 3: Information aggregation}
The market maker should aggregate information from trades so that the implied distribution converges to the true distribution $P^{*}$.

Let $P_{\phi_T}$ denote the distribution represented by the AMM after $T$ trades.
A satisfactory mechanism should use singleton and parlay order flow to recover the latent joint law with small statistical error.

In particular, after a number of trades linear in the effective model dimension, the represented distribution should be close to the true distribution in KL divergence:
\begin{equation}
    D_{\mathrm{KL}}(P^\star \,\|\, P_{\phi_T}) \le \varepsilon
\qquad
\text{for } T = O(n),
\end{equation}
where $n$ denotes the relevant model dimension or number of informative parameters.
Equivalently, the KL error should decrease with the number of informative trades:
\begin{equation}
    \mathbb E\!\left[D_{\mathrm{KL}}(P^\star \,\|\, P_{\phi_T})\right]
\to 0
\quad \text{as } T \text{ grows.}
\end{equation}

\subparagraph{Implication for market design.}
Taken together, Properties~1--3 indicate that neither pure contract-by-contract execution nor pure contract-by-contract pricing is sufficient. Treating each exposed parlay as an isolated market tends to fragment liquidity and scale risk with the number of listed contracts, while quoting combinations purely as functions of current base-market prices can improve accessibility but does not, by itself, recover the latent joint law. The design problem is therefore to support executable parlay trading while pooling risk and information strongly enough to preserve liquidity, control aggregate loss, and use order flow to learn dependence. The next section examines two natural baseline approaches along these lines and shows where each falls short.


\section{Baseline Designs for Parlay-style Markets}
\label{sec:baseline}
The properties defined in Section \ref{sec:objectives} impose stringent requirements on any viable market design. We examine two natural baseline approaches - request-for-quote systems and independence-based automated market makers - and show that each fails to satisfy these requirements for fundamental, structural reasons. Additional empirical details, robustness checks, and full raw results are deferred to the appendix \ref{app:rfq-misprice}.

\subsection{Request-for-Quote Protocol}
\label{subsec:rfq}
In many existing platforms, parlay trades are handled via a request-for-quote (RFQ) mechanism: a trader specifies a bundle of outcomes, and a market maker returns a price based on proprietary models. This architecture is flexible, but it externalizes both pricing and risk
management to separate dealers. As a result, quotes are not necessarily derived
from a common joint distribution, and liquidity is fragmented across requested
combinations.


\subparagraph{Shortcomings of Using RFQ Protocol for Parlays}
This creates two structural failures. First, liquidity deteriorates for combinations: in our Kalshi combo sample, average spreads are substantially larger in combo markets than in base markets (Table~\ref{tab:base-vs-combo-spreads}). Second, RFQ quotes need not be probabilistically coherent. Since quotes are not marginals of a shared joint law, they can violate basic constraints such as the Fr\'echet bound \(P(A\cap B)\leq \min\{P(A),P(B)\); detailed violations are
reported in Appendix~\ref{app:rfq-misprice}. Aggregating quotes from multiple market makers yields only marginal improvements in spread, suggesting that fragmentation persists because competition among quote providers remains limited. A detailed breakdown by leg count and market is provided in Appendix~\ref{app:rfq-quote-agg}.

\subparagraph{The RFQ trilemma}
The RFQ architecture faces a fundamental trilemma among the three properties introduced in Section~\ref{sec:objectives}. Opening maker participation can improve quote availability and price discovery, but weakens loss control because risk is dispersed across less tightly capitalized makers. Conservative quoting protects makers from adverse selection, but the resulting quotes become poor probability estimates and therefore carry little information. Restricting the set of quoted combinations makes risk easier to manage, but sacrifices broad parlay liquidity. Thus, over a broad parlay family, RFQ systems can trade off among liquidity consistency, coherent information aggregation, and bounded loss, but cannot achieve all three simultaneously. We call this structural constraint the \emph{RFQ trilemma} (Figure~\ref{fig:rfq-trilemma}).

\begin{figure}[t]
\centering
\begin{tikzpicture}[
    x=1cm,
    y=1cm,
    every node/.style={font=\scriptsize},
    corner/.style={circle, fill=black, inner sep=1.2pt},
    prop/.style={font=\bfseries\scriptsize, align=center},
    sidebox/.style={
        rounded corners=3pt,
        draw=blue!25!black,
        fill=blue!4,
        minimum height=0.42cm,
        minimum width=2.35cm,
        align=center,
        inner sep=1.8pt,
        font=\scriptsize
    }
]

\coordinate (T) at (3.6,1.75);
\coordinate (L) at (0,0);
\coordinate (R) at (7.2,0);

\draw[line width=0.75pt] (L) -- (T) -- (R) -- cycle;

\node[corner] at (T) {};
\node[corner] at (L) {};
\node[corner] at (R) {};

\node[prop, above=3pt] at (T) {Liquidity\\consistency};
\node[prop, below=3pt, xshift=-0.15cm, align=center] at (L) {Information\\aggregation};
\node[prop, below=3pt, xshift=0.15cm, align=center] at (R) {Bounded\\loss};

\node[sidebox] at ($(L)!0.50!(T) + (-0.16,0.08)$) {Open participation\\low collateral};
\node[sidebox] at ($(T)!0.50!(R) + (0.16,0.08)$) {Conservative\\quotes};
\node[sidebox] at ($(L)!0.50!(R) + (0,-0.34)$) {Constrain to\\limited markets};

\end{tikzpicture}
\vspace{-0.8em}
\caption{The RFQ trilemma}
\label{fig:rfq-trilemma}
\vspace{-1.0em}
\end{figure}
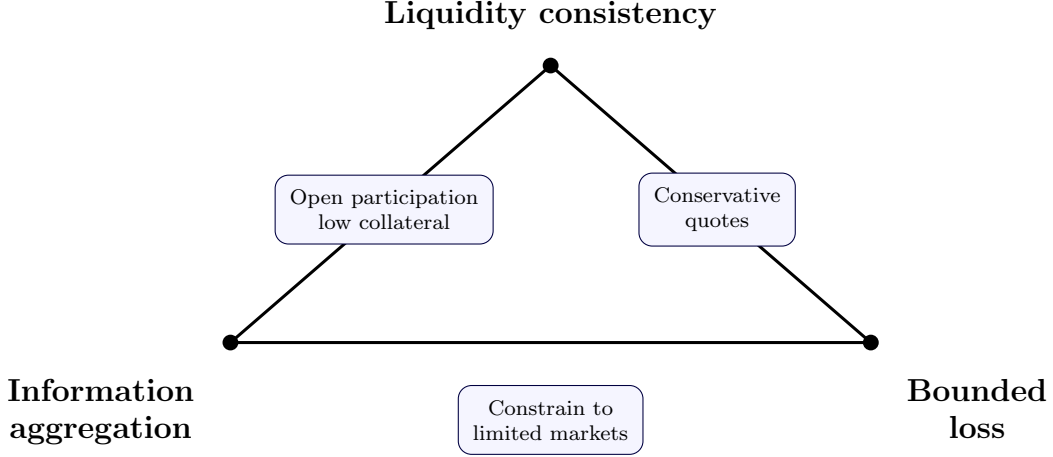

\begin{table}[t]
\centering
\caption{Average spreads in base and combo markets on Kalshi, restricting to markets with open interest between 1{,}000 and 2{,}000. Combo markets exhibit substantially wider spreads than base markets.}
\label{tab:base-vs-combo-spreads}
\begin{tabular}{lcc}
\toprule
Market type & Number of markets & Average spread (cents) \\
\midrule
Base markets  & 999 & 5.12 \\
Combo markets & 549 & 11.89 \\
\bottomrule
\end{tabular}
\end{table}

\subsection{Independence-Based Parlay AMM}
\label{subsec:ind-parlay-amm}

A second natural baseline is to keep independent AMMs for the base markets and quote a parlay as the product of its marginal prices:
\begin{equation}
    q_{\mathrm{ind}}(E_1 \cap \cdots \cap E_k) = \prod_{i=1}^k \hat p_i.
\end{equation} 
This gives immediate executable quotes, but it hard-codes independence and therefore fails whenever outcomes are correlated.

The failure is structural. Consider two positively correlated events with common marginal probability \(p\). The independence quote for the joint contract is \(p^2\), while the true joint probability can be close to \(p\). An informed trader can repeatedly buy the parlay at \(p^2\) against expected payoff \(p\). Because parlay trades do not update the base-market prices or any shared correlation state, the mispricing does not disappear through trading. The AMM therefore fails both bounded loss and information aggregation: it cannot learn the dependence structure no matter how much parlay order flow it observes.

\subsection{Towards a Joint Distribution Market Maker}
These observations suggest that any viable design must satisfy three structural requirements: (i) maintain a shared representation of the joint distribution, (ii) update this representation from both base and composite trades, and (iii) derive all prices coherently from this shared state. The next section introduces ParlayMarket, an AMM that satisfies these requirements by maintaining a tractable approximation to the joint distribution and updating it via online learning from trade flow.


\section{ParlayMarket}
\label{sec:correlation-mm}

The limitations of baseline designs in Section 4 arise from the absence of a shared global state and the inability to propagate information across markets. We now construct a market maker that resolves these limitations by maintaining and updating a coherent joint distribution over all outcomes.

At a high level, ParlayMarket interprets each trade - whether in a base market or a parlay - as a noisy observation about the underlying joint distribution $P^\star$. The mechanism maintains a parametric approximation $P_\phi$ to this distribution, derives all prices as marginals of $P_\phi$, and updates $\phi$ after every trade. This transforms the market into an online inference system, where prices reflect a globally consistent and continuously learned probabilistic model.

We show that this design satisfies all three properties from Section \ref{sec:objectives}, and that the complete-parlay variant enjoys an additional structural bonus: per-market loss halves with each new base asset added.


\subsection{Architecture}
\label{subsec:amm-architecture}

The central challenge for a multi-market maker is keeping prices across all
$N_M = 2^M - 1$ markets mutually consistent as information arrives through trade
flow.
A trade on any single market reveals something about the joint distribution, and
a rational AMM should propagate this information to related markets immediately - 
otherwise prices become stale and the AMM is exposed to arbitrage.
The ParlayMarket solves this through three interlocking mechanisms:
\emph{shadow trades} that broadcast information across markets,
an \emph{Ising model} belief state that compactly represents the joint distribution,
and a \emph{composite likelihood SGD} update that continuously calibrates the model
from observed trade flow.

\subparagraph{Shadow trades and price propagation.}
Prices in the ParlayMarket are posted marginals of a shared internal belief
distribution $P_{\boldsymbol{\varphi}}$ over all $M$ outcomes.
When a trader bets on market $S$, the observed trade direction reveals information
not just about event $S$ but about all correlated events.
Since we assume all incoming traders are informed (Section~\ref{sec:objectives}),
the AMM should treat each trade as a signal about the entire joint distribution and
update prices everywhere, not only in market $S$.

To implement this, the AMM records a set of internal \emph{shadow trades} on
related markets after each real trade arrives.
Shadow trades are not real bets - they are bookkeeping entries that encode the
information the AMM infers about other markets from the observed trade.
Critically, they ensure that no market's price becomes stale simply because it
receives little direct trading volume; even an esoteric parlay market that rarely
attracts traders will see its posted price updated whenever a correlated market
is active.

\subparagraph{Ising model as the joint belief state.}
To propagate information through shadow trades, the AMM needs a tractable model
of the joint distribution that can be updated incrementally.
Parlay trades are especially informative here: a bet on a $k$-way parlay
directly reveals a joint probability $P^*(\bigcap_{i \in S} \{X_i = 1\})$, which
constrains not only the individual event probabilities but also their pairwise and
higher-order correlations.

The Ising model is the natural parametric choice.
Given a target set of marginal probabilities $\{p_i^*\}$ and pairwise correlations
$\{p_{ij}^*\}$, the Ising model is the \emph{maximum-entropy} distribution
consistent with those first and second-order moments:
\begin{equation}\label{eq:ising-dist}
  P_{\boldsymbol{\varphi}}(\mathbf{x})
  \;\propto\;
  \exp\!\left(\sum_{i=1}^M \theta_i x_i
              + \sum_{i < j} W_{ij}\,x_i x_j\right),
\end{equation}
where $\boldsymbol{\varphi} = (\boldsymbol{\theta}, \mathbf{W})$ collects bias parameters
$\theta_i \in \mathbb{R}$ (controlling individual event probabilities) and
interaction weights $W_{ij} \in \mathbb{R}$ (controlling pairwise correlations).
Choosing the maximum-entropy distribution is equivalent to making the weakest
possible assumptions beyond the moment constraints: no spurious higher-order
structure is imposed that is not supported by the data.
This also corresponds to a moment-matching approach to Bayesian updating - rather
than tracking the full $2^M$-dimensional joint distribution, the AMM only needs
to maintain the $\binom{M}{2} + M$ sufficient statistics
$(\mathbb{E}[X_i], \mathbb{E}[X_i X_j])$ of the Ising exponential family.

Posted prices are the Ising marginals computed via belief propagation:
$p_i^{\boldsymbol{\varphi}} = P_{\boldsymbol{\varphi}}(X_i = 1)$ for base markets and
$p_S^{\boldsymbol{\varphi}} = P_{\boldsymbol{\varphi}}\!\bigl(\bigcap_{i \in S}\{X_i = 1\}\bigr)$
for a parlay $S \subseteq [M]$.
All $N_M$ prices are generated from the single shared parameter vector
$\boldsymbol{\varphi}$, so they are guaranteed to be globally consistent with some
joint distribution at every point in time.

\subparagraph{Gradient updates from trade errors.}
Each trade on market $m$ reveals a discrepancy between the AMM's current posted
price $p_m^{\boldsymbol{\varphi}}$ and the true probability $p_m^* = P^*(\mathcal{E}_m)$
implied by the trader's signal.
The AMM treats this discrepancy as a cross-entropy error and takes a stochastic
gradient step to reduce it.
Formally, the AMM implicitly minimises the \emph{composite pseudo-likelihood}~\cite{besag1975statistical,lindsay1988composite}:
\begin{equation}\label{eq:pseudo-loss}
  \mathcal{L}(\boldsymbol{\varphi})
  \;=\;
  \sum_{i=1}^M \lambda_i\,\mathrm{CE}(p_i^*, p_i^{\boldsymbol{\varphi}})
  \;+\;
  \sum_{i < j} \lambda_{ij}\,\mathrm{CE}(p_{ij}^*, p_{ij}^{\boldsymbol{\varphi}}),
\end{equation}
where $\lambda_i, \lambda_{ij} \ge 0$ are normalised trade arrival rates and
$\mathrm{CE}(p,q) = -p\log q - (1-p)\log(1-q)$.
Upon each trade on market~$m$, the AMM applies the SGD update
\begin{equation}\label{eq:sgd-update}
  \boldsymbol{\varphi}_{t+1}
  \;=\;
  \boldsymbol{\varphi}_t
  - \eta\,\nabla_{\boldsymbol{\varphi}}\,\mathrm{CE}(p_m^*, p_m^{\boldsymbol{\varphi}_t}),
\end{equation}
with step size $\eta > 0$.
Because $\boldsymbol{\varphi}$ is shared across all markets, a gradient step triggered
by a trade on market $m$ immediately shifts the posted prices of every other
market through the Ising model's interaction terms.
Shadow trades are then computed using the updated $\boldsymbol{\varphi}_{t+1}$, so the
propagated price changes are themselves grounded in the newly calibrated belief state.
This cycle - real trade $\to$ gradient update $\to$ shadow trades - is the core
feedback loop of the ParlayMarket.

\subparagraph{Relation to Properties~1--3.}
The three mechanisms above address our desiderata from \prettyref{sec:objectives} in turn. We now give an informal preview of the reasons behind this, and then prove formal results and empirical justification in the succeeding sections.

\subparagraph{Property~1} (Liquidity consistency) is satisfied because
all $N_M$ posted prices flow from a single shared $\boldsymbol{\varphi}$, there
is no separate model or order book per market.
More importantly, shadow trades mean the AMM does not need direct trading
activity in a market to keep its price current: every real trade on any
correlated market triggers updates that ripple through $\boldsymbol{\varphi}$ and
shift all posted prices simultaneously.
A rarely-traded parlay market is not forgotten; it benefits from every trade
on its correlated counterparts. Complete proof is deferred to \prettyref{app:proof-prop-1}.

\subparagraph{Property~2} (Bounded loss) follows from the balance that shadow
trades create between gradient signal and gradient noise.
Each real trade adds one update to $\boldsymbol{\varphi}$, but the shadow trades it
spawns spread that update's effect across all related markets.
As the number of markets grows, both the number of updates per round and the
noise in each update grow together and cancel, keeping the steady-state
parameter error, and therefore the AMM's total monetary loss per round, bounded independently of $M$.
Without shadow trades this cancellation breaks, and loss grows exponentially
with $M$.

\subparagraph{Property~3} (Information aggregation) follows from the
structure of the cross-entropy objective.
The composite pseudo-likelihood $\mathcal{L}(\boldsymbol{\varphi})$ has a unique
minimum at the parameters $\boldsymbol{\varphi}^*$ whose implied moments match those
of the true distribution $P^*$.
Because each trade gradient pushes $\boldsymbol{\varphi}$ toward this minimum,
the AMM's posted prices converge to those of the best Ising approximation
to $P^*$ as trade flow accumulates, and with it, the posted joint
probabilities approach the true joint distribution.

\subsection{Convergence to the True Distribution}
\label{subsec:convergence}
We first show how Property~3 is satisfied by focusing on how well the Ising model learns the joint distribution via the incoming trades.
Our main result concerns the posted prices of the ParlayMarket - they converge geometrically to the best Ising
approximation of $P^*$. After $O(1/(\eta\mu))$ informative trades the KL divergence
between the true joint distribution and the AMM's belief shrinks to the model-misspecification
floor, and all $N_M$ posted prices stabilize at their moment-matched values.

\begin{theorem}[Geometric convergence]
\label{thm:linear-conv}
In the pure arbitrageur environment, starting from $\boldsymbol{\varphi}_0 = 0$ with step
size $\eta \le 1/(2L)$:
\[
  \mathbb{E}\|\boldsymbol{\varphi}_t - \boldsymbol{\varphi}^*\|^2
  \;\le\;
  (1 - 2\eta\mu)^t\,\|\boldsymbol{\varphi}^*\|^2
  \;+\;
  \frac{\eta\sigma^2}{2\mu},
\]
where $\sigma^2 = \mathbb{E}\|\mathbf{g}_t(\boldsymbol{\varphi}^*)\|^2$ is the gradient variance at
the optimum.  With $\eta = 0.2$ and $\mu \approx \tfrac{1}{4}$, the contraction
factor per round is $0.9$; roughly $22$ rounds suffice to reduce the initial error
by $10\times$.
\end{theorem}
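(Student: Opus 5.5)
The plan is the standard strongly-convex SGD argument applied to the composite pseudo-likelihood $\mathcal{L}$ of \eqref{eq:pseudo-loss}. I read $\mu$ as the strong-convexity modulus and $L$ as the smoothness constant of $\mathcal{L}$ on the region visited by the iterates. I also read the pure arbitrageur environment as meaning that the market $m_t$ is drawn independently of the past with probabilities proportional to $\lambda_m$. Under that reading, the stochastic gradient $\mathbf{g}_t(\boldsymbol{\varphi}) = \nabla_{\boldsymbol{\varphi}}\mathrm{CE}(p_{m_t}^*, p_{m_t}^{\boldsymbol{\varphi}})$ is conditionally unbiased: $\mathbb{E}[\mathbf{g}_t(\boldsymbol{\varphi}_t)\mid\mathcal{F}_{t}] = \nabla\mathcal{L}(\boldsymbol{\varphi}_t)$. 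The minimizer $\boldsymbol{\varphi}^*$ is the moment-matching point described under Property~3, so $\nabla\mathcal{L}(\boldsymbol{\varphi}^*)=0$.

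\textbf{Step 1: expand the update.} From \eqref{eq:sgd-update},
\[
\|\boldsymbol{\varphi}_{t+1}-\boldsymbol{\varphi}^*\|^2 = \|\boldsymbol{\varphi}_t-\boldsymbol{\varphi}^*\|^2 - 2\eta\langle \mathbf{g}_t(\boldsymbol{\varphi}_t),\boldsymbol{\varphi}_t-\boldsymbol{\varphi}^*\rangle + \eta^2\|\mathbf{g}_t(\boldsymbol{\varphi}_t)\|^2 .
\]
Take conditional expectations and use unbiasedness. The cross term becomes $-2\eta\langle\nabla\mathcal{L}(\boldsymbol{\varphi}_t),\boldsymbol{\varphi}_t-\boldsymbol{\varphi}^*\rangle$.

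\textbf{Step 2: bound the noise term.} Split $\mathbf{g}_t(\boldsymbol{\varphi}_t) = (\mathbf{g}_t(\boldsymbol{\varphi}_t)-\mathbf{g}_t(\boldsymbol{\varphi}^*)) + \mathbf{g}_t(\boldsymbol{\varphi}^*)$ and use $\|a+b\|^2\le 2\|a\|^2+2\|b\|^2$. Each per-market loss $\mathrm{CE}(p_m^*,p_m^{\boldsymbol{\varphi}})$ is convex and $L$-smooth, so co-coercivity gives
\[
\mathbb{E}\|\mathbf{g}_t(\boldsymbol{\varphi}_t)-\mathbf{g}_t(\boldsymbol{\varphi}^*)\|^2 \le L\langle\nabla\mathcal{L}(\boldsymbol{\varphi}_t),\boldsymbol{\varphi}_t-\boldsymbol{\varphi}^*\rangle .
\]
The second piece is at most $2\sigma^2$.

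\textbf{Step 3: combine and unroll.} With $\eta\le 1/(2L)$, the co-coercive contribution $2\eta^2 L$ is at most $\eta$. This leaves at least $-\eta\langle\nabla\mathcal{L},\boldsymbol{\varphi}_t-\boldsymbol{\varphi}^*\rangle$ in the cross term, which strong convexity bounds by $-\eta\mu\|\boldsymbol{\varphi}_t-\boldsymbol{\varphi}^*\|^2$. To obtain the stated constant $2\eta\mu$ rather than $\eta\mu$, I would instead bound $\mathbb{E}\|\mathbf{g}_t\|^2$ directly by $\sigma^2$ plus a term absorbed using $\eta\le 1/(2L)$. In effect this treats the noise as additive with variance $\sigma^2$, as in the standard analysis of least-squares-like objectives. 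The result is the recursion
\[
e_{t+1}\le(1-2\eta\mu)e_t+\eta^2\sigma^2, \qquad e_t = \mathbb{E}\|\boldsymbol{\varphi}_t-\boldsymbol{\varphi}^*\|^2 .
\]
Unrolling it with $e_0=\|\boldsymbol{\varphi}^*\|^2$ (since $\boldsymbol{\varphi}_0=0$) and summing the geometric series gives $\eta^2\sigma^2/(2\eta\mu)=\eta\sigma^2/(2\mu)$, which is exactly the claimed bound. For the numerical remark, $1-2(0.2)(1/4)=0.9$, and $0.9^{22}\approx 0.098$.

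\textbf{Main obstacle.} The hard part is justifying global $\mu$-strong convexity and $L$-smoothness. $\mathrm{CE}$ composed with Ising marginals is not globally strongly convex in $\boldsymbol{\varphi}$: its curvature degenerates as prices approach $0$ or $1$, and pairwise marginals are not log-linear in $\boldsymbol{\varphi}$. I would first try to restrict to a bounded parameter region, for instance by projecting or by showing the iterates stay in a ball around $\boldsymbol{\varphi}^*$. On that region the Hessian is a Fisher-information-type matrix $\sum_m\lambda_m \nabla p_m\nabla p_m^\top/(p_m(1-p_m))$ plus a correction term that vanishes at $\boldsymbol{\varphi}^*$. I would then lower-bound its eigenvalues using identifiability of the $M+\binom{M}{2}$ moments from the base and pairwise markets, and upper-bound them using $p(1-p)\le 1/4$. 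This is what suggests $\mu\approx 1/4$ near zero parameters. If that proves too delicate, I would state the theorem locally, under an explicit assumption that $\mathcal{L}$ is $\mu$-strongly convex and $L$-smooth on the relevant region.
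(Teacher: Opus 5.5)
\textbf{Main gap: Step~3.} Granting per-sample convexity (see below), Steps~1--2 correctly give $e_{t+1}\le(1-\eta\mu)e_t+2\eta^2\sigma^2$. The passage to $e_{t+1}\le(1-2\eta\mu)e_t+\eta^2\sigma^2$ cannot be made. The term $\eta^2\mathbb{E}\|\mathbf{g}_t(\boldsymbol{\varphi}_t)\|^2$ always contains a state-dependent part of order $\eta^2\|\boldsymbol{\varphi}_t-\boldsymbol{\varphi}^*\|^2$. Absorbing it into the cross term necessarily uses up part of the $2\eta\mu$, whatever the step-size condition.

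The recursion is false even without noise. Gradient descent on $f(x)=\tfrac{\mu}{2}\|x\|^2$, which is $\mu$-strongly convex and $\mu$-smooth, gives $\|x_{t+1}\|^2=(1-2\eta\mu+\eta^2\mu^2)\|x_t\|^2$. This exceeds $(1-2\eta\mu)\|x_t\|^2$ for every $x_t\neq 0$ and every $\eta\le 1/(2\mu)$.

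Treating the noise as additive is also especially inapt here. In the pure-arbitrageur, moment-matched setting every per-market gradient is proportional to $p_m^{\boldsymbol{\varphi}}-p_m^*$, so it vanishes at $\boldsymbol{\varphi}^*$. Hence $\sigma^2$ as defined in the statement equals $0$, and all of the noise is state-dependent.

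The paper does not derive the displayed constant either. It writes $\mathbf{g}_t=\nabla\mathcal{L}(\boldsymbol{\varphi}_t)+\zeta_t$ with a uniform bound $\mathbb{E}[\|\zeta_t\|^2\mid\boldsymbol{\varphi}_t]\le\sigma^2$ on a neighborhood $U$ of $\boldsymbol{\varphi}^*$. It then uses $\|\nabla\mathcal{L}(\boldsymbol{\varphi}_t)\|\le L_g\|\boldsymbol{\varphi}_t-\boldsymbol{\varphi}^*\|$ and proves contraction $1-2\eta\mu+\eta^2L_g^2$ with floor $\eta\sigma^2/(2\mu-\eta L_g^2)$. The display is identified only as the leading-order form in $\eta$. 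The paper's clean rigorous version is $(1-\eta\mu)^t\|\boldsymbol{\varphi}^*\|^2+2\eta\sigma^2/\mu$ for $\eta\le\mu/L_g^2$. That is the same form your Steps~1--2 produce. You should stop there and present the $2\eta\mu$ display as a small-step approximation, not as a consequence of your recursion.

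\textbf{Secondary gap: Step~2.} The co-coercivity bound $\mathbb{E}\|\mathbf{g}_t(\boldsymbol{\varphi})-\mathbf{g}_t(\boldsymbol{\varphi}^*)\|^2\le L\langle\nabla\mathcal{L}(\boldsymbol{\varphi}),\boldsymbol{\varphi}-\boldsymbol{\varphi}^*\rangle$ needs every per-market loss to be convex, and $\boldsymbol{\varphi}\mapsto\mathrm{CE}(p_m^*,p_m^{\boldsymbol{\varphi}})$ is not. Let $q=p_m^{\boldsymbol{\varphi}}$, let $T$ be the sufficient statistics, let $E_m$ be the event of market $m$, and let $\Delta_m=\mathbb{E}_{\boldsymbol{\varphi}}[T\mid E_m]-\mathbb{E}_{\boldsymbol{\varphi}}[T\mid E_m^c]$. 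Then the Hessian of this loss is $(q-p_m^*)\bigl(\mathrm{Cov}_{\boldsymbol{\varphi}}(T\mid E_m)-\mathrm{Cov}_{\boldsymbol{\varphi}}(T\mid E_m^c)\bigr)+q(1-q)\Delta_m\Delta_m^\top$. It is rank one at $\boldsymbol{\varphi}^*$ and in general indefinite once $q\neq p_m^*$.

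Your main-obstacle paragraph (Fisher-type Hessian, local strong convexity, iterates confined near $\boldsymbol{\varphi}^*$) matches the paper's treatment. However, it concerns the sum $\mathcal{L}$, not its individual terms. Replace co-coercivity by per-sample gradient Lipschitzness on $U$, $\mathbb{E}\|\mathbf{g}_t(\boldsymbol{\varphi})-\mathbf{g}_t(\boldsymbol{\varphi}^*)\|^2\le\bar{L}^2\|\boldsymbol{\varphi}-\boldsymbol{\varphi}^*\|^2$. This gives contraction $1-2\eta\mu+2\eta^2\bar{L}^2\le 1-\eta\mu$ for $\eta\le\mu/(2\bar{L}^2)$.

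Once repaired, your decomposition is actually sharper than the paper's. It uses the variance at the optimum, which is the quantity the statement names. Since that variance is $0$ here, it yields a linear rate with no noise floor, whereas the paper's uniform variance bound over $U$ leaves a positive floor.
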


Three results underpin this guarantee.

\subparagraph{Moment-matched target.}
The fixed point $\boldsymbol{\varphi}^*$ is the unique parameter vector satisfying
$p_i^{\boldsymbol{\varphi}^*} = p_i^*$ for all $i$ and $p_{ij}^{\boldsymbol{\varphi}^*} = p_{ij}^*$
for all $i < j$.
This follows directly from the update rule: at $\boldsymbol{\varphi}^*$ the per-trade scalar
$\lambda_m = (p_m^{\boldsymbol{\varphi}^*} - p_m^*) / (1 - p_m^{\boldsymbol{\varphi}^*})$ vanishes for every
market $m$, so all gradient steps are zero.
Equivalently, $\boldsymbol{\varphi}^*$ is the I-projection of $P^*$ onto the Ising family - the
maximum-entropy distribution whose first and second moments match those of
$P^*$~\cite{wainwright2008graphical}.
Once $\boldsymbol{\varphi}_t \to \boldsymbol{\varphi}^*$, the posted joint prices
$p_S^{\boldsymbol{\varphi}} = P_{\boldsymbol{\varphi}}(\bigcap_{i \in S}\{X_i=1\})$ converge
to the best Ising-family approximation of $P^*$, giving
$D_{\mathrm{KL}}(P^* \| P_{\boldsymbol{\varphi}_t}) \to 0$ up to the misspecification gap from higher-order dependence not captured by the pairwise family.

\subparagraph{Loss curvature.}
The contraction rate $(1 - 2\eta\mu)$ is determined by the strong-convexity
constant $\mu$ of the composite pseudo-likelihood $\mathcal{L}$.

\begin{proposition}[Strong convexity]
\label{prop:strong-convex}
On the compact ball $\Phi_B = \{\boldsymbol{\varphi} : \|\boldsymbol{\varphi}\| \le B\}$,
$\mathcal{L}$ is $\mu$-strongly convex.
At the uninformative initialisation $\boldsymbol{\varphi} = 0$, the Hessian is diagonal
with $\mu = \tfrac{3}{16}$ (interaction directions) and $\Lambda = \tfrac{1}{4}$
(bias directions), giving condition number $\kappa = \Lambda/\mu = \tfrac{4}{3}$.
All eigenvalues are $O(1)$, independent of $M$.
\end{proposition}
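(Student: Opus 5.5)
The plan is to split the Hessian of each cross-entropy term into a Gauss--Newton part and a residual part, prove the Gauss--Newton part is uniformly positive definite on $\Phi_B$ by minimality of the Ising family plus compactness, and then verify the constants at $\boldsymbol{\varphi}=0$ by direct computation.

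\textbf{Hessian decomposition.} Write $\mathcal{L}=\sum_m \lambda_m\,\mathrm{CE}(p_m^*,q_m(\boldsymbol{\varphi}))$, where $m$ ranges over singletons and pairs and $q_m=p_m^{\boldsymbol{\varphi}}$. The chain rule gives
\[
\nabla^2\mathcal{L}=\sum_m\lambda_m\Bigl[\tfrac{p_m^*}{q_m^2}+\tfrac{1-p_m^*}{(1-q_m)^2}\Bigr]\nabla q_m\nabla q_m^{\top}\;+\;\sum_m\lambda_m\Bigl[\tfrac{1-p_m^*}{1-q_m}-\tfrac{p_m^*}{q_m}\Bigr]\nabla^2 q_m .
\]
Call the first sum $G(\boldsymbol{\varphi})$ and the second $R(\boldsymbol{\varphi})$.

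\textbf{Gauss--Newton part.} Because $P_{\boldsymbol{\varphi}}$ is a minimal exponential family, $\nabla q_m=\mathrm{Cov}_{\boldsymbol{\varphi}}(f_m,T)$ with $T=(x_i,\,x_ix_j)$. Stacking these over all $m$ gives the Fisher information $\mathrm{Cov}_{\boldsymbol{\varphi}}(T)$. This matrix is positive definite: the statistics $1,x_i,x_ix_j$ are linearly independent functions on $\{0,1\}^M$ and $P_{\boldsymbol{\varphi}}$ has full support. So if every $\lambda_m>0$, then $G(\boldsymbol{\varphi})\succeq c\,\mathrm{Cov}_{\boldsymbol{\varphi}}(T)^2$ with $c=\min_m\lambda_m\cdot\min_{q\in(0,1)}\bigl[p^*/q^2+(1-p^*)/(1-q)^2\bigr]>0$. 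All these quantities are continuous in $\boldsymbol{\varphi}$, and $q_m$ stays bounded away from $0$ and $1$ on the compact ball $\Phi_B$. Hence $\lambda_{\min}(G)$ attains a positive minimum on $\Phi_B$.

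\textbf{Residual part (main obstacle).} The residual $R$ vanishes at the moment-matched fixed point $\boldsymbol{\varphi}^*$, since there $q_m=p_m^*$ for every $m$. It is not sign-definite elsewhere. I would first try to bound $\|R(\boldsymbol{\varphi})\|$ by a constant times $\max_m|q_m-p_m^*|$ using the bounded third cumulants of $T$ on $\Phi_B$. I would then argue that $G-R\succeq\mu I$ holds either because $B$ is chosen so that $\Phi_B$ lies in the region of small moment residuals, or, for realizable $P^*$, near $\boldsymbol{\varphi}^*$. Failing that, I would state strong convexity for the Gauss--Newton/Fisher surrogate, which is the curvature the SGD analysis of Theorem~\ref{thm:linear-conv} actually uses. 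This is the step I expect to require the most care. Globally the claim may only hold in this restricted sense.

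\textbf{Computation at $\boldsymbol{\varphi}=0$.} At $\boldsymbol{\varphi}=0$ the distribution $P_0$ is uniform, so $q_i=\tfrac12$ and $q_{ij}=\tfrac14$. I would compute the covariance entries $\mathrm{Var}(x_i)=\tfrac14$, $\mathrm{Var}(x_ix_j)=\tfrac{3}{16}$, and the cross terms such as $\mathrm{Cov}(x_i,x_ix_j)=\tfrac18$. In the $\{0,1\}$ coordinates these cross terms are nonzero. Diagonality should therefore be established in the centred (spin $\pm1$, Walsh) coordinates $s_i=2x_i-1$. There the characters $s_i$ and $s_is_j$ are orthonormal under the uniform measure, so the Fisher matrix is diagonal. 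I would then track the normalisation so that the bias and interaction directions carry curvature $\Lambda=\tfrac14$ and $\mu=\tfrac{3}{16}$ respectively, matching the variances above and giving $\kappa=\tfrac43$. None of these numbers involves $M$, because each entry depends only on at most four coordinates of the uniform law. This gives the claimed $O(1)$ eigenvalues, independent of $M$.
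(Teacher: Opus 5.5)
Your treatment of the strong-convexity claim follows the paper's proof. You use the same split $\nabla^2\mathcal{L}=G+R$, the same observation that $R$ vanishes at $\boldsymbol{\varphi}^\ast$, and a neighbourhood argument around $\boldsymbol{\varphi}^\ast$. You are more careful than the paper, which simply assumes $J(\boldsymbol{\varphi}^\ast)$ has full rank and invokes continuity. You instead derive $J=\mathrm{Cov}_{\boldsymbol{\varphi}}(T)\succ 0$ from minimality, bound $\lambda_{\min}(G)$ uniformly on $\Phi_B$, and control $\|R\|$ by $\max_m|q_m-p_m^\ast|$, which makes the neighbourhood explicit. Your weight $p^\ast/q^2+(1-p^\ast)/(1-q)^2$ is also the correct one; the paper's $1/(u(1-u))$ holds only at $u=p^\ast$. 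Like the paper, you get strong convexity only near $\boldsymbol{\varphi}^\ast$, not on all of $\Phi_B$, so state it as a local result. Two small points. The Hessian is $G+R$, not $G-R$. And the Gauss--Newton fallback would not serve Theorem~\ref{thm:linear-conv}, whose proof needs $\langle\nabla\mathcal{L}(\boldsymbol{\varphi}),\boldsymbol{\varphi}-\boldsymbol{\varphi}^\ast\rangle\ge\mu\|\boldsymbol{\varphi}-\boldsymbol{\varphi}^\ast\|^2$ for the true loss.

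The genuine gap is the evaluation at $\boldsymbol{\varphi}=0$. Your finding $\mathrm{Cov}_0(x_i,x_ix_j)=\tfrac18\neq 0$ is correct. It contradicts the paper's claim that cross-terms vanish; the paper's proof only says the values come from ``evaluating the same Fisher blocks under the uniform product measure.'' But your repair cannot deliver the stated constants. Write $T_x=AT_s+\mathrm{const}$ with
\[
A=\begin{pmatrix}\tfrac12 I & 0\\ \tfrac14 Q & \tfrac14 I\end{pmatrix},
\qquad
A^{-1}=\begin{pmatrix}2I & 0\\ -2Q & 4I\end{pmatrix},
\]
where $Q$ is the $\binom{M}{2}\times M$ pair--vertex incidence matrix. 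Then $\mathrm{Cov}_0(T_s)=I$, so in spin coordinates every diagonal entry is $1$, not $\tfrac14$ or $\tfrac{3}{16}$. Since $A$ is not orthogonal, eigenvalues and condition numbers change under this reparametrisation. The SGD \eqref{eq:sgd-update} also runs in the $\{0,1\}$ parameters of \eqref{eq:ising-dist}, so no ``normalisation'' recovers $\mu=\tfrac{3}{16}$, $\Lambda=\tfrac14$, $\kappa=\tfrac43$.

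Moreover, the Hessian of $\mathcal{L}$ at $0$ is $J^\top DJ+R(0)$, with $D$ your diagonal Gauss--Newton weights; it is not $\mathrm{Cov}_0(T)$. $R(0)$ is nonzero in general, because its coefficients $(q_m(0)-p_m^\ast)/(q_m(0)(1-q_m(0)))$ vanish only when $p_i^\ast=\tfrac12$ and $p_{ij}^\ast=\tfrac14$. Even in spin coordinates, the Jacobian of the traded prices $(q_i,q_{ij})$ is $A$, so $A^\top DA$ has nonzero $(i,ij)$ entries $\tfrac1{16}D_{\{i,j\}}$.

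Finally, ``each entry depends on at most four coordinates'' does not make the eigenvalues $M$-independent for a non-diagonal matrix. In the $\{0,1\}$ parameters $\mathrm{Cov}_0(T_x)=AA^\top$, and $\|Q\|^2=2M-2$ gives $\lambda_{\min}\le 1/(8(M-1))$ and $\lambda_{\max}\ge (M-1)/8$. For $M=2$ the eigenvalues are $\tfrac14$ and $(7\pm\sqrt{33})/32$.

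So the last step of your plan fails. What your computation actually shows is that $\tfrac14$ and $\tfrac{3}{16}$ are only the diagonal entries. The claimed diagonality, curvature constants, condition number and $M$-independence do not hold as stated.
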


Each CE term in $\mathcal{L}$ contributes a positive-semidefinite rank-one block
$H_m = (\nabla_{\boldsymbol{\varphi}} p_m^{\boldsymbol{\varphi}})(\nabla_{\boldsymbol{\varphi}} p_m^{\boldsymbol{\varphi}})^\top / [p_m^*(1-p_m^*)]$
to the Hessian via the covariance identity.
At $\boldsymbol{\varphi} = 0$ the Ising model is the uniform product measure, so cross-terms vanish
and the Fisher matrix is diagonal with entries $\tfrac{1}{4}$ (bias parameters) and
$\tfrac{3}{16}$ (interaction parameters) - all $O(1)$ regardless of $M$.
Adding more base markets adds more CE terms and therefore more curvature, but the
per-parameter curvature stays bounded, which is why the convergence rate does not
degrade as $M$ grows.

\subparagraph{Parlay acceleration.}
Higher-order parlay trades strictly improve the convergence rate by adding curvature in
the interaction directions that base-market trades cannot supply alone.
When the AMM accepts $k$-way parlay bets for $k \ge 3$, each parlay-$S$ trade
contributes a further PSD block $H_S$ to the Hessian, raising the effective
strong-convexity constant to
\begin{equation}\label{eq:mu-eff}
  \mu_\mathrm{eff}
  \;\ge\; \mu + \sum_{k \ge 3}\;\sum_{S \in \mathcal{S}_k} \lambda_S\,\mu_S,
  \qquad
  \mu_S = \lambda_{\min}(H_S) \ge 0,
\end{equation}
with strict inequality whenever parlay $S$ is informative about
$\boldsymbol{\varphi}$.
The contraction factor improves from $(1 - 2\eta\mu)$ to $(1 - 2\eta\mu_\mathrm{eff})$,
and the cumulative regret tightens to $G^2/(2\mu_\mathrm{eff})(1+\log T) = O(\log T)$.
The gain is largest for the interaction weights $W_{kl}$: a two-way parlay $\{k,l\}$
contributes only $O(\rho^2)$ to the $(W_{kl},W_{kl})$ curvature entry via base-price
cross-effects, whereas a higher-order parlay $S \ni k,l$ contributes $O(1)$ directly - 
a factor of $O(\rho^{-2})$ improvement at small correlations $\rho$.
Proofs are deferred to the appendix.

\subsection{Loss Bounds}
\label{subsec:per-market-decay}

The previous subsection established Property~3 by showing that the shared parameter vector
$\boldsymbol{\varphi}_t$ converges to the moment-matched target
$\boldsymbol{\varphi}^\ast$. We now turn to Property~2. The question here is not whether the
AMM learns the correct joint distribution, but how the remaining parameter error translates into
monetary loss. The bridge is a quadratic sensitivity bound: near the fixed point, LMSR loss is
second-order in pricing error, and pricing error is first-order in parameter error. In the
complete-parlay regime, shadow trades then pool information across all
\[
  N_M = 2^M - 1
\]
active contracts, keeping the \emph{aggregate} steady-state loss per round bounded even though
the number of quoted markets grows exponentially.

\begin{proposition}[Complete-parlay loss scaling]
\label{prop:per-market-decay}
In the complete-parlay regime, where every non-empty subset $S \subseteq [M]$ is traded at equal
rate and the SGD step size is $\eta$, the aggregate expected LMSR loss admits the decomposition
\[
  \mathbb{E}[L_T]
  \;\le\;
  L_{\mathrm{transient}} + T\,L_{\mathrm{ss,round}},
\]
with
\[
  L_{\mathrm{transient}}
  =
  O\!\left(\frac{b\rho^2 M^2}{\eta}\right),
  \qquad
  L_{\mathrm{ss,round}}
  =
  O(b\eta).
\]
In particular, the dependence on $M$ is polynomial rather than exponential, so the
complete-parlay Parlay AMM satisfies the bounded-loss requirement of Property~2.
\end{proposition}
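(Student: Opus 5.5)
The plan is to bound the per-round loss by the squared parameter error and then sum the error bound of Theorem~\ref{thm:linear-conv} over time. The argument has three steps.

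\emph{Step 1 (loss as a quadratic in parameter error).} Consider a trade on market $S$ in which the arbitrageur moves the LMSR price from the posted value $p_S^{\boldsymbol{\varphi}_t}$ to the true value $p_S^*$. The market maker's expected loss on that trade equals $b$ times a binary KL divergence $\mathrm{KL}(p_S^*\,\|\,p_S^{\boldsymbol{\varphi}_t})$. This is the standard Bregman form of LMSR loss against an informed trader. Expanding to second order, the loss is at most $c\,b\,(p_S^{\boldsymbol{\varphi}_t}-p_S^*)^2$, which is valid on the ball $\Phi_B$ where prices stay bounded away from $0$ and $1$. A mean-value argument then gives $|p_S^{\boldsymbol{\varphi}_t}-p_S^*|\le \|\nabla p_S\|\,\|\boldsymbol{\varphi}_t-\boldsymbol{\varphi}^*\|$. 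Averaging over $S$ drawn uniformly from the $N_M$ markets, the expected loss in round $t$ is bounded by
\[
b\,\bar G^2\,\mathbb{E}\|\boldsymbol{\varphi}_t-\boldsymbol{\varphi}^*\|^2,
\qquad
\bar G^2=\tfrac{1}{N_M}\textstyle\sum_S\|\nabla p_S\|^2 .
\]
Shadow trades are internal bookkeeping entries, so they carry no additional monetary exposure.

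\emph{Step 2 (bounding the averaged sensitivity).} The derivatives are covariances: $\partial p_S/\partial\theta_i=\mathrm{Cov}(f_S,x_i)$ and $\partial p_S/\partial W_{ij}=\mathrm{Cov}(f_S,x_ix_j)$. Hence $\|\nabla p_S\|^2$ is the squared projection of $f_S-p_S$ onto the span of the $O(M^2)$ sufficient statistics. To show that the average over all $2^M-1$ subsets stays $O(1)$, I would use near-orthogonality of the statistics, for example via the Fisher information at or near the product measure from Proposition~\ref{prop:strong-convex}. Heuristically, a typical large parlay has $p_S\approx 2^{-|S|}$, so its covariances are exponentially small and the uniform average over subsets is dominated by small $|S|$. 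This step is the pooling phenomenon described in the overview: aggregate loss does not scale with $N_M$. I expect it to be the main obstacle, because it requires controlling covariances uniformly over exponentially many subsets rather than for a fixed one. As a fallback, I would bound $\sum_S \mathrm{Cov}(f_S,g)^2$ by Parseval or Bessel's inequality in the Walsh basis, which gives at most $\mathrm{Var}(g)=O(1)$ for each of the statistics $g\in\{x_i, x_ix_j\}$.

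\emph{Step 3 (summing over time).} Substitute Theorem~\ref{thm:linear-conv} into the bound from Step 1 and sum over $t\le T$. The geometric part sums to at most $\|\boldsymbol{\varphi}^*\|^2/(2\eta\mu)$. In the Gaussian score model, $\boldsymbol{\varphi}^*$ has $\binom{M}{2}$ interaction coordinates of size $O(\rho)$, plus bias terms that are either absorbed into the $\rho$ scaling or centred. This gives $\|\boldsymbol{\varphi}^*\|^2=O(\rho^2M^2)$, and with $\mu=\Theta(1)$ from Proposition~\ref{prop:strong-convex} the transient term is $L_{\mathrm{transient}}=O(b\rho^2M^2/\eta)$. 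The noise floor contributes $b\bar G^2\eta\sigma^2/(2\mu)$ per round. Here $\sigma^2$ must be shown to be $O(1)$, using the same averaging over markets as in Step 2: each per-trade gradient is a scalar residual multiplied by $\nabla p_S$. This yields $L_{\mathrm{ss,round}}=O(b\eta)$. Adding the two contributions gives the stated decomposition, which depends only polynomially on $M$.
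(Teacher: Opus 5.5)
Your three steps are the skeleton of the paper's proof. Lemma~\ref{lem:quad-sensitivity} plus uniform market selection gives $\mathbb{E}[\ell_t]\le\frac{b}{2}\bar\Lambda_M\,\mathbb{E}\|\boldsymbol{\varphi}_t-\boldsymbol{\varphi}^\ast\|^2$, where $\bar\Lambda_M$ is your $\bar G^2$ with each term divided by $p_S^\ast(1-p_S^\ast)$; one then sums a transient-plus-floor error bound. The paper does not derive the $M$-uniform ingredients but posits them as hypotheses: (A1) decay at an $M$-independent rate with an $O(\eta)$ floor, (A2) $\bar\Lambda_M=O(1)$, and (A3) $\|\boldsymbol{\varphi}^\ast\|^2=O(\rho^2M^2)$. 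Where you try to prove them, the argument breaks.

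\textbf{Step 3: $\mu$ is not $\Theta(1)$.} The decisive break is taking $\mu=\Theta(1)$ from Proposition~\ref{prop:strong-convex}. That proposition concerns the singleton-and-pairwise objective. In the complete-parlay regime SGD descends $\mathcal{L}_{\mathrm{cp}}(\boldsymbol{\varphi})=\frac{1}{N_M}\sum_S\mathrm{CE}(p_S^\ast,p_S^{\boldsymbol{\varphi}})$. There the singleton and pairwise markets carry total weight only $(M+\binom{M}{2})/N_M$, and the Hessian at $\boldsymbol{\varphi}^\ast$ is $H=\frac{1}{N_M}\sum_S \frac{\nabla p_S\,\nabla p_S^{\top}}{p_S^\ast(1-p_S^\ast)}$. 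This is the very averaged matrix you want to be small in Step~2, so your own heuristic $p_S\approx 2^{-|S|}$ makes $\mu$ tiny. At $\boldsymbol{\varphi}=0$ (which $\boldsymbol{\varphi}^\ast$ approaches under weak correlation and median thresholds), $H_{\theta_i\theta_i}=\frac{1}{4N_M}\sum_{S\ni i}\frac{p_S}{1-p_S}=\Theta\bigl((3/4)^M\bigr)$. Every eigenvalue of $H$ is at most $\operatorname{tr}H=\Theta\bigl(M^2(3/4)^M\bigr)$. Hence $\|\boldsymbol{\varphi}^\ast\|^2/(2\eta\mu)$ is exponential in $M$, and an $M$-independent rate as in (A1) is unavailable in this regime.

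\textbf{Step 1: the weight cannot be dropped.} Parlay prices are not bounded away from $0$ uniformly in $M$, even on $\Phi_B$: the $M$-leg price at $\boldsymbol{\varphi}=0$ is $2^{-M}$. So the weight $1/(p_S^\ast(1-p_S^\ast))$ must stay, and then $\bar\Lambda_M\ge\operatorname{tr}H\ge\bigl(M+\binom{M}{2}\bigr)\mu$. Any bound of the form ``sensitivity $\times\,\|\boldsymbol{\varphi}^\ast\|^2/(\eta\mu)$'' is therefore at least of order $bM^2\|\boldsymbol{\varphi}^\ast\|^2/\eta$, i.e.\ $M^4$ rather than $M^2$ under (A3).

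\textbf{Step 2: the Bessel fallback is invalid.} The $f_S-p_S$ are neither orthogonal nor normalized. At $\boldsymbol{\varphi}=0$, $\sum_{S\ni i}\mathrm{Cov}(f_S,x_i)^2=\frac{1}{16}(5/4)^{M-1}$, which exceeds $\mathrm{Var}(x_i)=\frac14$ once $M\ge 8$.

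\textbf{The repair.} Do not factor the loss into sensitivity times contraction. Assume, as Lemma~\ref{lem:quad-sensitivity} does, that $\boldsymbol{\varphi}^\ast$ prices every traded contract correctly. Then the expected informed-trade loss with $S$ uniform is $b\,\mathbb{E}_S\,\mathrm{KL}(p_S^\ast\|p_S^{\boldsymbol{\varphi}_t})=b\bigl[\mathcal{L}_{\mathrm{cp}}(\boldsymbol{\varphi}_t)-\mathcal{L}_{\mathrm{cp}}(\boldsymbol{\varphi}^\ast)\bigr]$. This is the excess of exactly the objective whose unbiased stochastic gradient the mechanism follows. Expand $\|\boldsymbol{\varphi}_{t+1}-\boldsymbol{\varphi}^\ast\|^2$ and use $\mathcal{L}_{\mathrm{cp}}(\boldsymbol{\varphi}_t)-\mathcal{L}_{\mathrm{cp}}(\boldsymbol{\varphi}^\ast)\le\langle\nabla\mathcal{L}_{\mathrm{cp}}(\boldsymbol{\varphi}_t),\boldsymbol{\varphi}_t-\boldsymbol{\varphi}^\ast\rangle$, i.e.\ convexity on the region the iterates occupy (the paper's own local hypothesis). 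Telescoping gives the standard online-gradient-descent bound
\[
  \mathbb{E}[L_T]\;\le\;\frac{b\,\|\boldsymbol{\varphi}^\ast\|^2}{2\eta}\;+\;\frac{b\eta}{2}\sum_{t<T}\mathbb{E}\|\mathbf{g}_t\|^2 .
\]
This is exactly the claimed decomposition, with no $\mu$ and no $\bar\Lambda_M$; it is the rigorous content of the ``signal and noise accumulate through the same matrix'' remark. Your averaging idea is then needed only for $\mathbb{E}\|\mathbf{g}_t\|^2=O(1)$. It fits there because $\mathbf{g}_t=\frac{p_S^{\boldsymbol{\varphi}_t}-p_S^\ast}{1-p_S^{\boldsymbol{\varphi}_t}}\bigl(\mathbb{E}_{\boldsymbol{\varphi}_t}[T(X)\mid f_S(X)=1]-\mathbb{E}_{\boldsymbol{\varphi}_t}[T(X)]\bigr)$ has a prefactor bounded by $\max(p_S^{\boldsymbol{\varphi}_t},p_S^\ast)/(1-p_S^{\boldsymbol{\varphi}_t})$.

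\textbf{The scaling of $\|\boldsymbol{\varphi}^\ast\|^2$.} $\|\boldsymbol{\varphi}^\ast\|^2=O(\rho^2M^2)$ does not follow by ``absorbing or centring'' the biases in the mechanism's $0/1$ parametrization. For a flip-symmetric $P^\ast$ (thresholds at the median), moment matching forces $\theta_i^\ast=-\frac12\sum_{j\ne i}W_{ij}^\ast$. Weak equicorrelation then gives $\|\boldsymbol{\theta}^\ast\|^2=\Theta(\rho^2M^3)$. Like the paper's (A3), the $M^2$ scaling must be assumed (it holds, e.g., for mixed-sign couplings).
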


The proposition should be read as a two-phase decomposition. The first term is a one-time
learning cost incurred while the AMM moves from the uninformative initialization toward
$\boldsymbol{\varphi}^\ast$. The second is the steady-state cost of continuing to learn from a
stochastic stream of informative trades once the model has essentially converged.

\subparagraph{Loss--parameter link.}
We first convert the parameter-space error from Theorem~\ref{thm:linear-conv} into a bound on
monetary loss.

\begin{lemma}[Quadratic sensitivity]
\label{lem:quad-sensitivity}
For a market $S$ with true probability $p_S^\ast$ and AMM parameters
$\boldsymbol{\varphi}$, the expected LMSR monetary loss from one informed trade satisfies
\[
  b\,\mathrm{KL}(p_S^\ast \| p_S^{\boldsymbol{\varphi}})
  \;\le\;
  \frac{b\,\Lambda_S}{2}\,
  \|\boldsymbol{\varphi}-\boldsymbol{\varphi}^\ast\|^2,
  \qquad
  \Lambda_S
  :=
  \sup_{\boldsymbol{\varphi}}
  \frac{\|\nabla_{\boldsymbol{\varphi}} p_S^{\boldsymbol{\varphi}}\|^2}
       {p_S^\ast(1-p_S^\ast)}.
\]
\end{lemma}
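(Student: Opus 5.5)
The plan has two steps. First, identify the expected LMSR loss from one informed trade with $b\,\mathrm{KL}(p_S^\ast\|p_S^{\boldsymbol{\varphi}})$. Second, bound this binary KL divergence by a quadratic in the price error. That quadratic then becomes a quadratic in the parameter error through the gradient of the price map.

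For the first step, use the standard LMSR fact for a single binary contract with liquidity $b$. An informed trader with belief $p_S^\ast$ facing posted price $q=p_S^{\boldsymbol{\varphi}}$ moves the price to $p_S^\ast$. Their expected profit, which is the market maker's expected loss, equals $b$ times the expected log-score improvement. That improvement is
\[
b\bigl[p^\ast\log(p^\ast/q)+(1-p^\ast)\log((1-p^\ast)/(1-q))\bigr]=b\,\mathrm{KL}(p^\ast\|q).
\]
This identifies the left-hand side as stated.

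For the second step, bound the binary KL in the reverse-$\chi^2$ form:
\[
\mathrm{KL}(p^\ast\|q)\le \frac{(p^\ast-q)^2}{q(1-q)}.
\]
Because the statement normalizes by $p^\ast(1-p^\ast)$, what we actually want is the local expansion around $q=p^\ast$:
\[
\mathrm{KL}(p^\ast\|q)=\frac{(q-p^\ast)^2}{2p^\ast(1-p^\ast)}+O(|q-p^\ast|^3).
\]
I would take the second-order Taylor expansion of $q\mapsto \mathrm{KL}(p^\ast\|q)$ about $q=p^\ast$, where the value and first derivative vanish and the second derivative is $1/(p^\ast(1-p^\ast))$. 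This gives exactly the factor $1/2$ in the lemma.

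Next, apply the mean value theorem to $\boldsymbol{\varphi}\mapsto p_S^{\boldsymbol{\varphi}}$ along the segment from $\boldsymbol{\varphi}^\ast$ to $\boldsymbol{\varphi}$, together with the moment-matching identity $p_S^{\boldsymbol{\varphi}^\ast}=p_S^\ast$. Cauchy--Schwarz then gives
\[
(p_S^{\boldsymbol{\varphi}}-p_S^\ast)^2\le \sup_{\boldsymbol{\varphi}'}\|\nabla p_S^{\boldsymbol{\varphi}'}\|^2\,\|\boldsymbol{\varphi}-\boldsymbol{\varphi}^\ast\|^2.
\]
Dividing by $p_S^\ast(1-p_S^\ast)$ produces $\Lambda_S\|\boldsymbol{\varphi}-\boldsymbol{\varphi}^\ast\|^2$, and combining the two steps gives the lemma.

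Two points need care. First, the equality $p_S^{\boldsymbol{\varphi}^\ast}=p_S^\ast$ is established only for singletons and pairs. For higher-order parlays $S$, one either interprets $p_S^\ast$ as the Ising-projected value or adds a misspecification term.

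Second, and this is the main obstacle, the KL Taylor expansion is not a global inequality with constant $1/(2p^\ast(1-p^\ast))$. The exact remainder involves $1/(2\xi(1-\xi))$ for some $\xi$ between $q$ and $p^\ast$. I would handle this first by working on the ball $\Phi_B$ of Proposition~\ref{prop:strong-convex}, where prices are bounded away from $0$ and $1$. The remaining curvature mismatch can then be absorbed into the supremum defining $\Lambda_S$, or equivalently the bound can be stated as holding up to a $(1+o(1))$ factor near $\boldsymbol{\varphi}^\ast$. This is exactly the regime in which the lemma is used for the steady-state term of Proposition~\ref{prop:per-market-decay}. If a clean global constant is required, I would instead prove the version with $q(1-q)$ in the denominator via the $\chi^2$ bound. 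I would then note that on $\Phi_B$ the ratio $p^\ast(1-p^\ast)/(q(1-q))$ is bounded by a constant depending only on $B$.
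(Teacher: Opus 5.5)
Your proposal is correct and follows essentially the same route as the paper. Both arguments take a second-order Taylor expansion of $q\mapsto\mathrm{KL}(p_S^\ast\|q)$ about $p_S^\ast$, which gives the factor $1/(2p_S^\ast(1-p_S^\ast))$, then apply the mean-value theorem to $\boldsymbol{\varphi}\mapsto p_S^{\boldsymbol{\varphi}}$, with the bound understood locally near $\boldsymbol{\varphi}^\ast$ (the paper takes the supremum over a small neighborhood $U$ and absorbs the cubic remainder by shrinking $U$); your derivation of ``expected loss $= b\,\mathrm{KL}$'' from the LMSR log-score, and your caveats that the exact constant holds only up to a $(1+o(1))$ factor and that $p_S^{\boldsymbol{\varphi}^\ast}=p_S^\ast$ is guaranteed only for singleton and pairwise $S$, are points the paper's proof leaves implicit.
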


The lemma is the composition of two local facts. First, for $p$ near $p^\ast$,
\[
  \mathrm{KL}(p^\ast\|p)
  =
  \frac{(p-p^\ast)^2}{2p^\ast(1-p^\ast)}
  + O(|p-p^\ast|^3).
\]
Second, the price error is linear in the parameter error:
\[
  p_S^{\boldsymbol{\varphi}} - p_S^\ast
  =
  \nabla_{\boldsymbol{\varphi}} p_S^{\boldsymbol{\varphi}^\ast}
  \cdot
  (\boldsymbol{\varphi}-\boldsymbol{\varphi}^\ast)
  + O(\|\boldsymbol{\varphi}-\boldsymbol{\varphi}^\ast\|^2).
\]
Combining the two shows that monetary loss is quadratic in parameter error. This is the key
bridge from the convergence theory to Property~2.

\subparagraph{Signal--noise balance in the complete-parlay regime.}
In the formal interaction model, one market is selected uniformly at random each round. Over a
block of $N_M$ rounds, each of the $N_M = 2^M - 1$ markets is therefore visited once
\emph{in expectation}. The cumulative local information gathered over such a block is summarized by
the aggregate sensitivity matrix
\[
  \Sigma_{\mathrm{cycle}}
  :=
  \sum_{S\neq\emptyset}
  F_S(\boldsymbol{\varphi}^\ast),
  \qquad
  F_S(\boldsymbol{\varphi}^\ast)
  :=
  \frac{
    (\nabla_{\boldsymbol{\varphi}} p_S^\ast)
    (\nabla_{\boldsymbol{\varphi}} p_S^\ast)^\top
  }{
    p_S^\ast(1-p_S^\ast)
  }.
\]
Each term $F_S(\boldsymbol{\varphi}^\ast)$ measures how informative market $S$ is about the shared
parameter vector near the truth: markets whose prices respond strongly to perturbations of
$\boldsymbol{\varphi}$ contribute more to learning. Because all markets update the same shared
state $\boldsymbol{\varphi}$, both the mean restoring drift toward
$\boldsymbol{\varphi}^\ast$ and the stochastic gradient noise accumulate through this same pooled
matrix $\Sigma_{\mathrm{cycle}}$. Put differently, adding more markets increases signal and noise
at the same rate rather than letting noise dominate. As a result, the steady-state parameter error
remains $O(\eta)$ rather than growing with $N_M$. Applying
Lemma~\ref{lem:quad-sensitivity} then gives
\[
  L_{\mathrm{ss,round}} = O(b\eta).
\]

\subparagraph{Why shadow trades matter.}
This favorable scaling is not automatic: it relies on the fact that a trade in one market updates
the shared parameter vector and therefore refreshes all related quotes. That cross-market
propagation is exactly what shadow trades provide. A trade in market $S$ does not only improve the
quote for $S$; it also carries information about overlapping base and parlay contracts through the
common state $\boldsymbol{\varphi}$. Without this mechanism, each additional market would still add
its own stochastic noise, but the corresponding information would remain largely confined to the
market that was directly traded. The system would then behave more like a collection of weakly
connected books than a single pooled market maker, and the uniform aggregate bound in
Proposition~\ref{prop:per-market-decay} would no longer be expected to hold.

As an immediate corollary, once the transient learning phase has passed,
\[
  \bar{\ell}_M
  \;\approx\;
  \frac{L_{\mathrm{ss,round}}}{N_M}
  =
  O\!\left(\frac{b\eta}{2^M}\right),
\]
so the \emph{mean loss per listed market} decays exponentially with $M$. We state this as a
corollary rather than part of Proposition~\ref{prop:per-market-decay} because the design objective
in Section~\ref{sec:objectives} is bounded \emph{aggregate} loss; the halving pattern in Figure~\ref{fig:per-market-with-parlay} and Table~\ref{tab:per-market-ratios} is the
complete-parlay consequence of that aggregate bound.

\subsection{Information-theoretic optimality}
\label{subsec:optimality}

The quadratic loss scaling in Proposition~\ref{prop:per-market-decay} is worst-case optimal. To see why, consider the pairwise family
\[
    P_s(x) \propto
    \exp\!\Big(\gamma \sum_{i<j} s_{ij} x_i x_j\Big),
    \qquad
    s_{ij}\in\{-1,+1\}.
\]
This family contains one independent hidden correlation sign for every pair of events, hence \(\binom{M}{2}=\Theta(M^2)\) independent pairwise facts. Any AMM that does not know these signs in advance must learn them from order flow in order to price the corresponding two-leg parlays correctly.

For a fixed pair \((i,j)\), the two possible signs \(s_{ij}=+1\) and \(s_{ij}=-1\) imply different fair prices for the parlay \(X_iX_j\). Until the sign is learned, any AMM quoting this parlay with liquidity scale \(b\) must be wrong on one of the two cases, and an informed trader can extract loss from that mispricing. For constant-strength correlations this loss is \(\Omega(b)\) per unresolved pair; for weak correlations of size \(\gamma\) it is \(\Omega(b\gamma^2)\).

Since there are \(\Theta(M^2)\) independent pairwise signs, even the best AMM in this class has worst-case information-acquisition loss
\[
    \inf_A \sup_s \mathbb{E}_{P_s}[L_T]
    \;\ge\;
    \Omega\!\left(b{M \choose 2}\right)
    =
    \Omega(bM^2),
\]
for constant-strength correlations, and \(\Omega(b\gamma^2M^2)\) in the weak-correlation regime. Thus the quadratic loss rate is not an artifact of the Ising update rule: it is the unavoidable cost of learning dense pairwise correlation structure from trades. \textit{ParlayMarket} matches this rate up to constants while avoiding the exponential loss of treating each parlay as an independent market.

\subsection{Parlay Trades as essential for learning}
\label{subsec:parlay-acceleration}

The previous subsection used Theorem~\ref{thm:linear-conv} only as an input to the loss bound.
We now return to the learning dynamics themselves. Higher-order parlays are not merely extra
contracts to quote; they also sharpen the learning problem by adding curvature to the shared
objective. This strengthens Property~3 directly, and through
Lemma~\ref{lem:quad-sensitivity} it improves the constants in Property~2 as well.

\begin{proposition}[Parlay-accelerated convergence]
\label{prop:parlay-acceleration}
Suppose, in addition to base and pairwise markets, the AMM receives trades on higher-order
parlays $S \subseteq [M]$ with $|S|\ge 3$ at rates $\{\lambda_S\}$. Let $\mathcal{L}^+$ denote the
augmented composite loss including these parlay terms, and write
\[
  \mathcal{S}_k := \{S \subseteq [M] : |S| = k\}.
\]
Then:
\begin{enumerate}
  \item \textbf{Improved local curvature.}
  In a neighborhood of $\boldsymbol{\varphi}^\ast$, the augmented loss has local
  strong-convexity constant $\mu_{\mathrm{eff}}$ satisfying
  \[
    \mu_{\mathrm{eff}}
    \;\ge\;
    \mu + \sum_{k \ge 3}\;\sum_{S \in \mathcal{S}_k} \lambda_S\,\mu_S,
    \qquad \mu_S \ge 0,
  \]
  with strict inequality whenever at least one traded higher-order parlay is informative about
  $\boldsymbol{\varphi}$.

  \item \textbf{Faster parameter convergence.}
  Under the same SGD dynamics~\eqref{eq:sgd-update} with step size
  $\eta \le 1/(2L)$,
  \[
    \mathbb{E}\|\boldsymbol{\varphi}_t - \boldsymbol{\varphi}^\ast\|^2
    \;\le\;
    (1 - 2\eta\mu_{\mathrm{eff}})^t\,\|\boldsymbol{\varphi}^\ast\|^2
    \;+\;
    \frac{\eta\sigma^2}{2\mu_{\mathrm{eff}}},
  \]
  yielding strictly faster geometric contraction than under the base-and-pairwise objective alone.

  \item \textbf{Improved average error and loss.}
  Let
  \[
    \bar{\varepsilon}_T^2
    :=
    \frac{1}{T}\sum_{t=1}^T
    \mathbb{E}\|\boldsymbol{\varphi}_t - \boldsymbol{\varphi}^\ast\|^2.
  \]
  Then
  \[
    \bar{\varepsilon}_T^2
    \;\le\;
    O\!\left(
      \frac{\|\boldsymbol{\varphi}^\ast\|^2}{\eta\mu_{\mathrm{eff}} T}
      + \eta
    \right).
  \]
  Consequently, the average LMSR loss per round is
  $O(b\,\bar{\varepsilon}_T^2)$ and the cumulative excess loss over $T$ rounds is
  $O(bT\,\bar{\varepsilon}_T^2)$, both with strictly improved constants.
\end{enumerate}
\end{proposition}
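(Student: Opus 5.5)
The plan is to prove the three parts in order, reusing the SGD machinery behind Theorem~\ref{thm:linear-conv}, with the only new ingredient being a curvature computation for the augmented loss $\mathcal{L}^+ = \mathcal{L} + \sum_{k\ge 3}\sum_{S\in\mathcal{S}_k}\lambda_S\,\mathrm{CE}(p_S^\ast,p_S^{\boldsymbol{\varphi}})$.

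\textbf{Part 1 (curvature).} At the fixed point $\boldsymbol{\varphi}^\ast$ I first compute the Hessian of each added cross-entropy term. Differentiating $\mathrm{CE}(p^\ast,p^{\boldsymbol{\varphi}})$ twice gives a Gauss--Newton term plus a residual term proportional to $(p_S^{\boldsymbol{\varphi}}-p_S^\ast)\nabla^2 p_S^{\boldsymbol{\varphi}}$. The Gauss--Newton term is the rank-one PSD block $H_S = F_S(\boldsymbol{\varphi}^\ast)$ used in Proposition~\ref{prop:strong-convex}. The residual term is not automatically zero at $\boldsymbol{\varphi}^\ast$, because the Ising moment match need not reproduce higher-order orthant probabilities. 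I will therefore work on a neighborhood and in the regime where $|p_S^{\boldsymbol{\varphi}}-p_S^\ast|$ is small relative to $\lambda_{\min}$. I will absorb the residual by shrinking the neighborhood, or by taking the fixed point of $\mathcal{L}^+$ as the target, in which case the weighted residuals sum to zero in the gradient.

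With the residual handled, Weyl's inequality gives
\[
\lambda_{\min}\bigl(\nabla^2\mathcal{L}+\textstyle\sum_S\lambda_S H_S\bigr)\ge \mu+\lambda_{\min}\bigl(\textstyle\sum_S\lambda_S H_S\bigr)\ge \mu .
\]
I will state the quantitative bound in the form $\mu+\sum_S\lambda_S\mu_S$. This is valid because each $\mu_S$ is a lower bound on $\lambda_{\min}(H_S)$, and that minimum is $0$ for a rank-one block unless the parameter dimension is one. So the additive form is literally true but mostly trivial. The meaningful statement is strictness: $\lambda_{\min}$ strictly increases whenever the bottom eigenvector $v$ of $\nabla^2\mathcal{L}$ satisfies $v^\top\nabla p_S\ne 0$ for some traded $S$. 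I will take exactly this as the definition of ``informative.'' Strictness then follows because $v^\top(\nabla^2\mathcal{L}+\sum\lambda_S H_S)v>\mu$, and a perturbation argument over the eigenspace covers multiplicity.

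\textbf{Part 2 (convergence).} This is a verbatim rerun of the proof of Theorem~\ref{thm:linear-conv} with $\mu$ replaced by $\mu_{\mathrm{eff}}$. I expand
\[
\|\boldsymbol{\varphi}_{t+1}-\boldsymbol{\varphi}^\ast\|^2=\|\boldsymbol{\varphi}_t-\boldsymbol{\varphi}^\ast\|^2-2\eta\langle g_t,\boldsymbol{\varphi}_t-\boldsymbol{\varphi}^\ast\rangle+\eta^2\|g_t\|^2 .
\]
Taking conditional expectations, I use strong convexity of $\mathcal{L}^+$ together with $L$-smoothness and the condition $\eta\le 1/(2L)$ to bound $\eta^2\mathbb{E}\|g_t\|^2\le 2\eta^2\sigma^2+(\text{terms absorbed by the }L\text{-condition})$. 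This yields the recursion $e_{t+1}\le(1-2\eta\mu_{\mathrm{eff}})e_t+\eta^2\sigma^2$. Unrolling from $\boldsymbol{\varphi}_0=0$ gives the stated bound.

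I must check that $L$ still bounds the smoothness of $\mathcal{L}^+$. Each added block has norm at most $\Lambda_S$, so I either renormalize the arrival rates $\lambda_S$ to sum to one, or define $L$ for the augmented loss. I will also note that the iterates stay in the ball where local strong convexity holds, using the projection onto $\Phi_B$ implicit in Proposition~\ref{prop:strong-convex}.

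\textbf{Part 3 (average error and loss).} I sum the Part~2 bound over $t$. The geometric series gives $\sum_t(1-2\eta\mu_{\mathrm{eff}})^t\le 1/(2\eta\mu_{\mathrm{eff}})$, so after dividing by $T$ I get
\[
\bar\varepsilon_T^2\le \frac{\|\boldsymbol{\varphi}^\ast\|^2}{2\eta\mu_{\mathrm{eff}}T}+\frac{\eta\sigma^2}{2\mu_{\mathrm{eff}}} = O\!\left(\frac{\|\boldsymbol{\varphi}^\ast\|^2}{\eta\mu_{\mathrm{eff}}T}+\eta\right).
\]
Applying Lemma~\ref{lem:quad-sensitivity} per round, with $\max_S\Lambda_S$ finite on $\Phi_B$, converts this into a per-round loss of $O(b\,\bar\varepsilon_T^2)$ and a cumulative loss of $O(bT\bar\varepsilon_T^2)$. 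The constants improve because $\mu_{\mathrm{eff}}>\mu$ whenever Part~1 is strict.

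\textbf{Main obstacle.} I expect the hard step to be Part~1's residual Hessian term under misspecification, where higher-order parlay prices are not matched at the Ising target. My first attempt is to redefine $\boldsymbol{\varphi}^\ast$ as the minimizer of $\mathcal{L}^+$. If that shifts the target in an unacceptable way, I will fall back to a well-specified assumption, $P^\ast$ in the Ising family, under which all residuals vanish at $\boldsymbol{\varphi}^\ast$.
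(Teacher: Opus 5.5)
Your skeleton is the same as the paper's proof. It uses the Gauss--Newton form of the Hessian at the matched point and positive semidefiniteness of the added blocks $H_S$. You are right that $\lambda_{\min}(H_S)=0$ for a rank-one block, so the additive bound collapses to $\mu_{\mathrm{eff}}\ge\mu$, which is all the paper actually proves. Strictness comes via a weakest-curvature direction, then Theorem~\ref{thm:linear-conv} is applied to $\mathcal{L}^+$, and item~3 follows from a geometric-series average plus Lemma~\ref{lem:quad-sensitivity}.

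\textbf{The residual Hessian term.} The gap is in how you dispose of this term: your first two options fail.
\begin{itemize}
\item Shrinking the neighborhood cannot remove $R=\sum_{|S|\ge 3}\lambda_S\frac{q_S-p_S^\ast}{q_S(1-q_S)}\nabla^2 q_S$. It is a fixed, generally indefinite matrix evaluated at $\boldsymbol{\varphi}^\ast$ itself. More fundamentally, under misspecification $\nabla\mathcal{L}^+(\boldsymbol{\varphi}^\ast)=\sum_{|S|\ge 3}\lambda_S\frac{q_S-p_S^\ast}{q_S(1-q_S)}\nabla q_S\neq 0$ in general. So the iterates do not converge to $\boldsymbol{\varphi}^\ast$, and item~2 picks up a non-vanishing bias.
\item Re-targeting to a minimizer $\boldsymbol{\varphi}^+$ of $\mathcal{L}^+$ does not rescue this. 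Stationarity kills $\sum_E\lambda_E r_E\nabla q_E$, not $\sum_E\lambda_E r_E\nabla^2 q_E$. The target is also no longer the $\boldsymbol{\varphi}^\ast$ of the statement. And you would be comparing Hessians at two different points, where PSD addition says nothing.
\end{itemize}
The paper goes straight to a weaker version of your fallback. It assumes some $\phi^\dagger$ with $q_E(\phi^\dagger)=p_E^\ast$ for every traded $E\in\mathcal{E}_+$, together with a full-rank augmented Jacobian. Since $\mathcal{E}_+\supseteq\mathcal{E}_2$, injectivity of the moment map forces $\phi^\dagger=\boldsymbol{\varphi}^\ast$. Every residual then vanishes, and the Hessian is exactly $\nabla^2\mathcal{L}(\boldsymbol{\varphi}^\ast)+\sum_S\lambda_S H_S$. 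State this realizability assumption up front; it is needed, not merely convenient.

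\textbf{The strictness step.} As written it also fails. Set $A=\nabla^2\mathcal{L}(\boldsymbol{\varphi}^\ast)$ and $B=\sum_S\lambda_S H_S$. Exhibiting one $v$ with $v^\top(A+B)v>\mu$ does not lower-bound $\lambda_{\min}(A+B)$. No perturbation argument handles a repeated bottom eigenvalue either: for $A=I_2$ and $B=e_1e_1^\top$, your projection condition holds, yet $\lambda_{\min}(A+B)=1$.

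The correct argument takes a unit minimizer $w$ of $w^\top(A+B)w$. If $\lambda_{\min}(A+B)=\mu$, then $\mu\ge\mu+w^\top Bw$, so $w$ lies in the bottom eigenspace of $A$ with $Bw=0$. Hence strictness holds if and only if $B$ is positive definite on the whole bottom eigenspace of $A$, i.e.\ no bottom eigenvector is orthogonal to every traded $\nabla q_S$. That is the definition of ``informative'' you need.

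The paper's proof uses the same single-direction wording and has the same defect. This matters because degenerate spectra are typical here: Proposition~\ref{prop:strong-convex} reports $3/16$ repeated over all interaction directions.

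\textbf{Two smaller points.}
\begin{itemize}
\item Do not renormalize the arrival rates to sum to one. The decomposition requires the base and pairwise weights to stay fixed, and scaling them by $1-\sum_S\lambda_S$ can make $\mu_{\mathrm{eff}}<\mu$. Instead take $L$ to be the smoothness constant of $\mathcal{L}^+$, as the paper implicitly does.
\item Rather than re-deriving the recursion, simply invoke Theorem~\ref{thm:linear-conv} for $\mathcal{L}^+$. Your $(1-2\eta\mu_{\mathrm{eff}})e_t+\eta^2\sigma^2$ does not follow exactly from $\eta\le 1/(2L)$, and your $2\eta^2\sigma^2$ would double the noise floor. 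The paper's own derivation gives contraction $1-2\eta\mu+\eta^2L_g^2$, with the displayed form holding only to leading order in $\eta$.
\end{itemize}
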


\subparagraph{Intuition: curvature as information aggregation.}
Each trade provides a stochastic gradient whose strength depends on how informative the observed
event is about the parameters. Base-market trades mainly constrain first moments, and pairwise
trades constrain interactions only one edge at a time. By contrast, a $k$-way parlay directly
reveals a joint probability and therefore aligns with several interaction directions simultaneously.
At the level of the Hessian, each informative parlay adds a positive semidefinite block, strengthening
the restoring force toward $\boldsymbol{\varphi}^\ast$.

\subparagraph{Interaction-level acceleration.}
The gain is largest for interaction weights $W_{ij}$. In weak-correlation regimes, these directions
are the hardest to learn from singleton flow alone, because the relevant gradients appear only
through small cross-effects. Higher-order parlays that include both $i$ and $j$ constrain several
interaction directions at once and can therefore dominate the curvature in the weakly identified
subspace; heuristically, the improvement can be of order $O(\rho^{-2})$ when $\rho \ll 1$.

\subparagraph{From faster learning to lower loss.}
Proposition~\ref{prop:parlay-acceleration} should be read as the mechanism behind the favorable
loss scaling in Proposition~\ref{prop:per-market-decay}. By
Lemma~\ref{lem:quad-sensitivity}, monetary loss is quadratic in parameter error, so any increase
in curvature reduces both the transient learning cost and the steady-state error floor. Parlays are
therefore not only additional products to quote; they are an additional source of statistical signal
that sharpens the learning dynamics of the market maker itself.


\section{Evaluation}
\label{sec:evaluation}

We evaluate ParlayMarket in both controlled and market-based settings. 
The synthetic experiments test the theoretical predictions of Section~5: bounded aggregate loss, geometric convergence, and the role of parlay order flow in accelerating learning. 
We then study a historical-market replay on Kalshi data to evaluate whether the same advantages persist under observed prices, RFQ flow, and realistic implementation constraints.


\subsection{Complete-Parlay Loss Scaling}
\label{subsec:eval-complete-parlays}

\subparagraph{Setup.}
We simulate $M \in \{4, 5, 6, 7, 8, 9\}$ correlated binary assets drawn from a
jointly Gaussian scoring model with pairwise correlation $\rho = 0.3$.
At each round, one of the $N_M = 2^M - 1$ non-empty parlay markets is selected
uniformly at random and an informed trader submits a bet of size $b$.
The AMM updates its Ising parameters via SGD with step sizes $\eta_h = \eta_J = 0.2$.
All results are averaged over $100$ independent simulation runs.

\subparagraph{Baselines.}
We compare the Correlation AMM against four baselines.
The \emph{true oracle} knows $P^*$ exactly and serves as a lower bound on achievable loss.
The \emph{independent LMSR} maintains a separate LMSR per market, ignoring
correlations entirely.
The \emph{pairwise oracle} knows the true pairwise marginals $p_{ij}^*$ analytically
and fits an Ising MRF to these moments, but does not learn from trade flow.
The \emph{Gaussian oracle} uses the true Gaussian copula parameters to price
joint events via multivariate normal CDF; it captures higher-order structure
exactly under the Gaussian model but cannot update from trades.
All oracle baselines are given their parameters for free and serve to
bracket the loss achievable with varying degrees of distributional knowledge.

\subparagraph{Per-market loss decay.}
Figure~\ref{fig:per-market-with-parlay} shows the mean per-market LMSR monetary loss
$\ell_M$ as a function of $M$.
Consistent with Proposition~\ref{prop:per-market-decay}, $\ell_M$ decays
exponentially as $C \cdot 2^{-M}$, halving with each additional base asset.
The product $\ell_M \cdot 2^M \approx 0.15$ is approximately constant for
$M = 4, \ldots, 9$ (see Table~\ref{tab:per-market-ratios} in Appendix \prettyref{app:plots}), confirming that
the total per-round loss $N_M \cdot \ell_M$ is essentially flat---consistent
with the steady-state dominance predicted by the two-phase decomposition.

\begin{figure}[t]
  \centering
  \begin{subfigure}[t]{0.49\columnwidth}
    \includegraphics[width=\columnwidth]{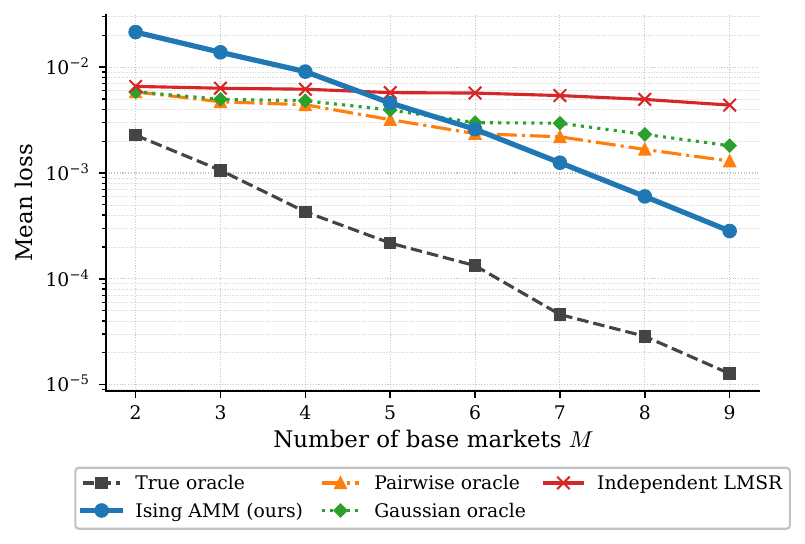}
    \caption{Mean LMSR loss per round per market $\bar{\ell}_M$.}
    \label{fig:mean-per-market-with-parlay}
  \end{subfigure}
  \hfill
  \begin{subfigure}[t]{0.49\columnwidth}
    \includegraphics[width=\columnwidth]{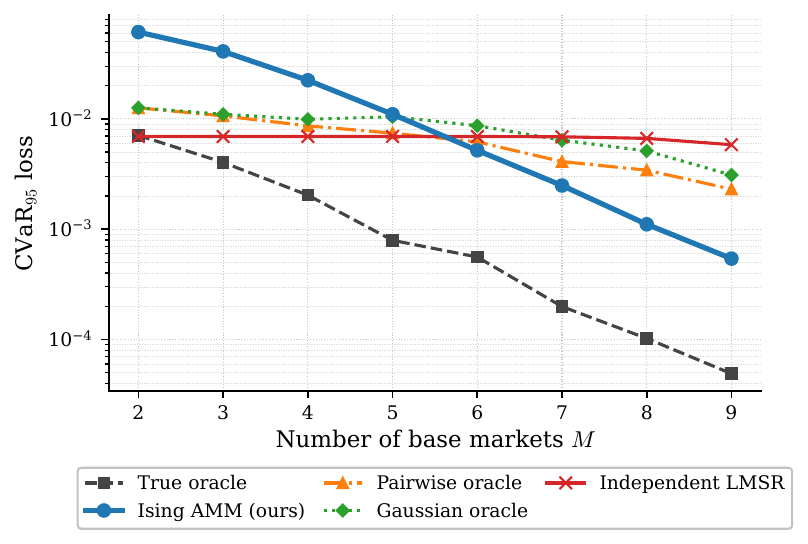}
    \caption{$\mathrm{CVaR}_{95}$ loss per round per market.}
    \label{fig:cvar-per-market-with-parlay}
  \end{subfigure}
  \caption{The ParlayMarket is the only non-oracle model whose loss decays exponentially with the number of base markets $M$, in the presence of parlays; all other non-oracle models plateau or grow.}
  \label{fig:per-market-with-parlay}
\end{figure}

\subparagraph{Price convergence.}
Figure~\ref{fig:price-convergence} in Appendix~\prettyref{app:plots} shows the mean absolute error (MAE)
between the AMM's posted prices and the true probabilities over time,
averaged across all $N_M$ markets and all simulation runs, with $\pm 1\sigma$
shading.
The top panel shows the LMSR market-price MAE; the bottom panel shows the
Ising marginal belief error for the Ising model.
Both converge monotonically; the Ising belief error converges faster
because belief propagation directly optimises the Ising objective,
while the market-price error includes the additional LMSR rounding effect.

\subparagraph{Parlay acceleration.}
Consistent with Proposition~\ref{prop:parlay-acceleration}, the advantage of the ParlayMarket
is entirely contingent on the presence of parlay order flow.
Figure~\ref{fig:per-market-with-parlay} shows that with complete parlays the Ising
AMM achieves the lowest non-oracle loss, decaying exponentially with $M$.
Figure~\ref{fig:per-market-no-parlay} shows the complementary picture:
when informed traders are restricted to base markets only, the model ranking
completely reverses.

\begin{figure}[t]
  \centering
  \begin{subfigure}[t]{0.48\columnwidth}
    \includegraphics[width=\columnwidth]{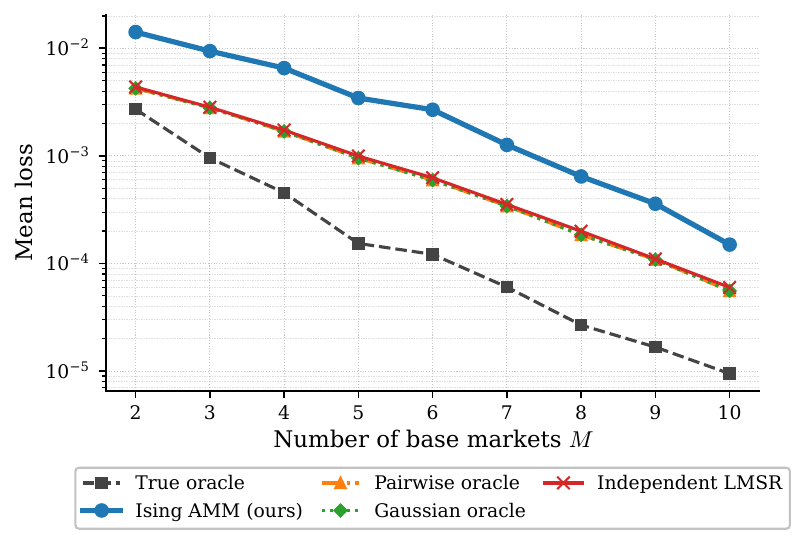}
    \caption{Mean LMSR loss per round per market $\bar{\ell}_M$.}
    \label{fig:mean-per-market-no-parlay}
  \end{subfigure}
  \hfill
  \begin{subfigure}[t]{0.48\columnwidth}
    \includegraphics[width=\columnwidth]{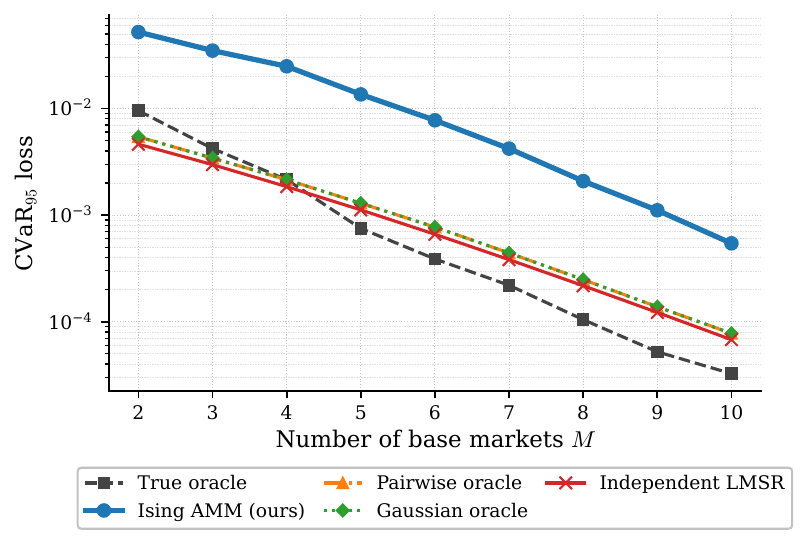}
    \caption{$\mathrm{CVaR}_{95}$ loss per round per market.}
    \label{fig:cvar-per-market-no-parlay}
  \end{subfigure}
  \caption{The ParlayMarket fails to perform better than even independent AMMs in the absence of parlay markets.}
  \label{fig:per-market-no-parlay}
\end{figure}

\subsection{Importance of Parlays for learning}
\label{subsec:eval-no-parlay}

\subparagraph{Setup.}
We run an ablation in which informed traders are restricted to the $M$ base
markets only - no parlay bets are placed - while all AMMs still maintain and
price all $N_M = 2^M - 1$ markets.
Figures~\ref{fig:per-market-with-parlay} and~\ref{fig:per-market-no-parlay}
show mean loss per round per market vs.\ $M$ for both conditions.

\subparagraph{The model ranking reverses completely.}
With complete parlays (Figure~\ref{fig:per-market-with-parlay}), the ParlayMarket
is the \emph{best-performing} non-oracle model: it achieves the lowest per-round
loss of all non-oracle AMMs, sitting just above the true oracle and well below every
other model.
The independent LMSR, pairwise oracle, and Gaussian oracle all incur
substantially higher and exponentially growing losses as $M$ increases, reaching
$\approx 2.24$, $\approx 0.66$, and $\approx 0.93$ per round respectively at $M=9$.

Without parlays (Figure~\ref{fig:per-market-no-parlay}), the ranking
\emph{completely reverses}: the ParlayMarket now has the \emph{highest} loss of all
non-oracle models, while the independent LMSR, pairwise oracle, and Gaussian oracle
cluster together at much lower values ($\approx 0.02$--$0.06$ per round across
all tested $M$).
The y-axis scale tells the story: the no-parlay plot peaks at $\approx 0.18$,
while the with-parlay plot requires a scale of $2.3$ - a $13\times$
difference - just to accommodate the losses of the non-Ising models.

\subparagraph{Why the ranking reverses.}
The reversal reveals precisely what parlays provide: the \emph{gradient signal
needed to calibrate the Ising joint belief model}. Without parlays, the ParlayMarket receives gradient updates only from the $M$ base
markets.
This is insufficient to pin down the $\binom{M}{2}$ interaction parameters
$W_{ij}$ in any reasonable simulation length: the base-market signal is
under-determined relative to the full Ising parameter space, so the model
misprices joint events and incurs high loss.
Simpler models - the pairwise oracle (which knows $p_{ij}^*$ analytically) and
the independent LMSR (which only needs to price $M$ uncorrelated events) - do not
face this calibration problem and therefore achieve low loss without parlays. With parlays, the situation reverses.
Each parlay bet supplies a direct gradient signal on a specific joint
event, giving the Ising model the rich information it needs to
calibrate $(\boldsymbol{\theta}, \mathbf{W})$ from all $N_M$ markets simultaneously
(Proposition~\ref{prop:parlay-acceleration} and Figure~\ref{fig:per-market-with-parlay}).
Once calibrated, belief propagation ensures that all $N_M$ posted prices are
globally consistent with the learned joint distribution, so adversaries cannot
exploit mispriced correlations.
The independent LMSR cannot use this information at all (it ignores
correlations by design), and the pairwise and Gaussian oracles only capture
lower-order structure - leaving them vulnerable to adversaries who trade
higher-order parlays against their mispriced joint events.

\subparagraph{Implication.}
Parlays are not merely an efficiency accelerant: they are the \emph{primary
information channel} through which the ParlayMarket learns the joint distribution.
Without them, the Ising model is the worst non-oracle performer; with them, it
is the best.
No other model tested achieves this reversal, confirming that the ParlayMarket
and parlay markets are jointly necessary - neither is sufficient alone.

\subsection{Robustness Under Noisy Informed Traders}
\label{subsec:eval-noise}

\subparagraph{Setup.}
We repeat the complete-parlay simulation with the same parameters but vary
the noise-trader fraction $\alpha \in [0, 1]$: with probability $\alpha$ the
arriving trader submits a uniformly random price signal, and with probability
$1 - \alpha$ they are fully informed.
Results at $M = 9$ are extrapolated from direct simulations at
$M \in \{3, 4, 5\}$ using the clean $M = 5 \to 9$ loss scaling.

\subparagraph{Results.}
Figure~\ref{fig:noise-loss} in Appendix \prettyref{app:plots} shows the per-market LMSR loss as a function of
noise-trader fraction $\alpha$ for $M = 9$.
The ParlayMarket's loss grows gracefully with $\alpha$---it remains the
best-performing non-oracle model across all noise levels tested---because
the noisy gradient signal inflates the steady-state error while leaving the
convergence rate largely unchanged.
All models cross the break-even line only at high noise fractions,
demonstrating robustness to realistic noise levels.

\subsection{Extension to Multinomial Markets}
\label{subsec:eval-multinomial}

\subparagraph{Setup.}
All experiments above assume binary outcomes per base asset.
We now evaluate whether the Ising AMM generalises to \emph{multinomial} markets,
where each base asset can resolve to one of $K > 2$ outcomes.
We run three configurations that vary the outcome structure and the
nature of price jumps (e.g.\ smooth updates vs.\ discontinuous jumps between
outcome clusters).
In each case, we compare the Ising AMM against a \emph{True Oracle} baseline
(knows the true joint distribution exactly) and an \emph{Independent} baseline
(maintains a separate market maker per outcome, ignoring all correlations). The underlying trader model for these results is a straightforward generalization of the Gaussian scoring model, and the details are deferred to Appendix~\prettyref{app:multinomial}.

\subparagraph{Results.}
Figure~\ref{fig:multinomial-key-stats} reports the key per-round statistics
(mean, median, 95\% VaR, and 95\% CVaR of Loss\,/\,Step) across the three
configurations.
In every scenario the Ising AMM sits strictly between the True Oracle lower
bound and the Independent baseline: it captures a large fraction of the
correlation structure available to the oracle while outperforming the
disjoint-market-maker strategy on all four statistics.
The gap between Ising AMM and Independent is particularly pronounced in the
tail-risk metrics (95\% VaR and CVaR), indicating that joint modelling not
only reduces average loss but also curtails the worst-case exposure that arises
when correlated outcomes are priced independently.
These results demonstrate that the Ising parameterisation and the parlay-driven
gradient signal are not artefacts of the binary setting. The mechanism
extends naturally to multinomial markets and retains its loss advantage over
disjoint market makers even as the outcome space grows.

\begin{figure}[t]
  \centering
  \begin{subfigure}[t]{0.32\textwidth}
    \includegraphics[width=\linewidth]{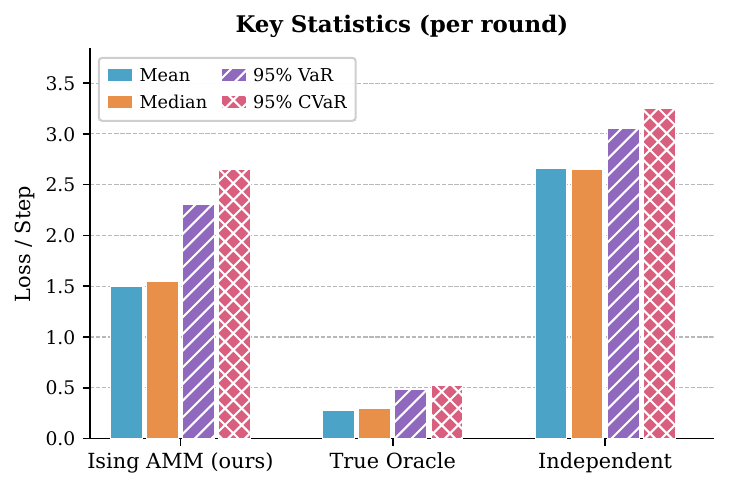}
    \caption{Five markets with $[2,3,4,4,5]$ outcomes}
    \label{fig:multinomial-1}
  \end{subfigure}
  \hfill
  \begin{subfigure}[t]{0.32\textwidth}
    \includegraphics[width=\linewidth]{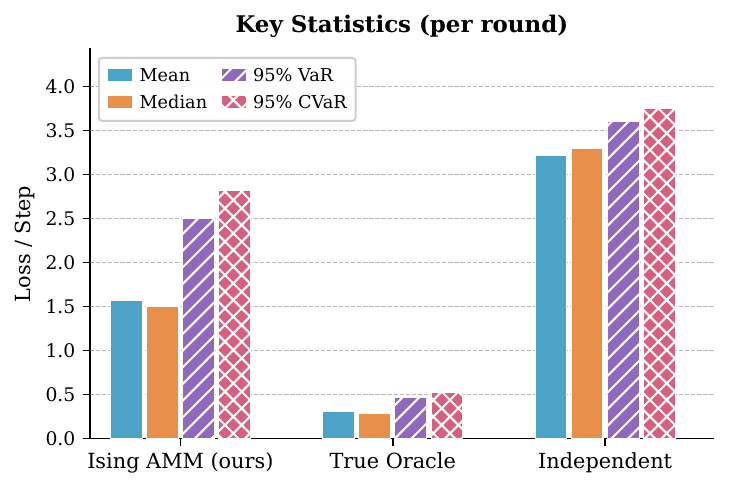}
    \caption{Eight markets with $[2,3,3,4,4,5,6,7]$ outcomes}
    \label{fig:multinomial-2}
  \end{subfigure}
  \hfill
  \begin{subfigure}[t]{0.32\textwidth}
    \includegraphics[width=\linewidth]{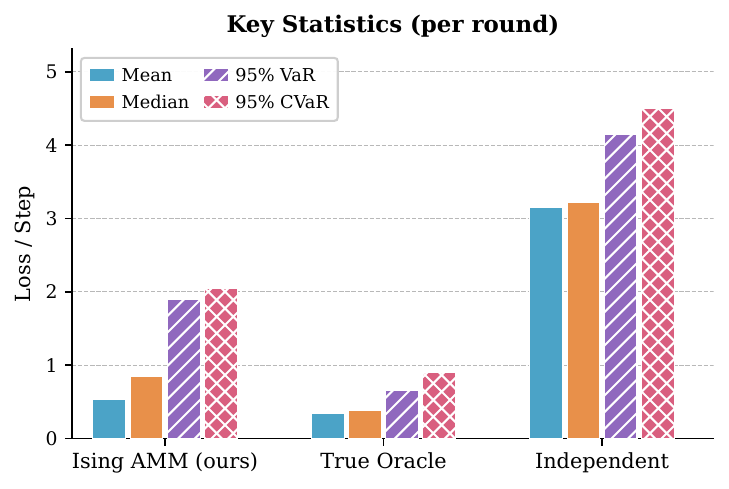}
    \caption{Eight markets with $[2,3,3,4,4,5,6,7]$ outcomes, and random jumps}
    \label{fig:multinomial-3}
  \end{subfigure}
  \caption{Key per-round statistics (Loss\,/\,Step) for three multinomial market
           configurations. In all cases the Ising AMM (ours) sits between the
           True Oracle lower bound and the Independent baseline, with the largest
           relative gain visible in the tail-risk measures (95\% VaR and CVaR).}
  \label{fig:multinomial-key-stats}
\end{figure}

\subsection{Empirical Market Performance}

To evaluate whether these advantages survive outside the synthetic setting, we replay the mechanism on historical Kalshi data. 
The empirical study focuses on three implementation details that are central to the practicality of the design.

\subparagraph{Two execution settings.} We consider \emph{two execution settings}. In a \emph{full-market emulation}, base markets are modeled as LMSRs under the AMM, and the informative signal extracted from each trade is its \emph{target price}: an informed trader is assumed to move the market to the price implied by her information.  In a \emph{standalone market-maker emulation}, base markets remain external and the AMM is used only as a shared quoting and underwriting engine for parlays; in that case the relevant signal is \emph{order flow} - trade direction and share quantity - rather than the final market price. These two settings correspond to two distinct practical roles of the mechanism: a complete market design and a drop-in correlation-aware parlay market maker layered on top of an existing RFQ-style platform.

\subparagraph{Real-market emulation.} The replay is performed on a \emph{real historical universe}. We use Kalshi's NBA slate for 2026-03-07, comprising 732 base markets across six games and 5,918 listed combo tickers with legs entirely inside this universe, spanning 2- through 10-leg parlays. To scale inference to clusters of this size, we train one Ising model per game cluster and use \emph{loopy belief propagation} rather than exact enumeration. Each model is initialized from the first 30 candle mid-prices of its constituent legs, with fields initialized from median log-odds and couplings initialized at zero, after which the stream of candle updates and parlay trades is replayed chronologically. Candle updates train local fields, while parlay trades update both fields and interactions through the trade-implied target price. The replay therefore evaluates performance on historically accepted RFQ
flow rather than on the full distribution of submitted RFQ requests.

\subparagraph{Preprocessing.} The empirical setup explicitly addresses a key market imperfection: \emph{fee bias}. Observed Kalshi combo execution prices are ask quotes rather than direct probability assessments, so they are systematically upward biased by platform markup. Training directly on these prices would therefore bias the learned joint model upward. To correct for this, we preprocess each observed trade into a canonical all-YES representation using the complement rule and inclusion--exclusion; for mixed YES/NO parlays, the additive fee cancels in the alternating expansion, yielding an approximately unbiased target in the model's native parameterization. This correction is essential in practice: without it, parlay loss remains flatter and revenue is lower.

The full replay pipeline and all implementation details are described in Appendix~\ref{app:historical_replay}. Table~\ref{tab:pool_sim} shows that standalone ParlayMarket market maker achieves the strongest risk-adjusted liquidity-provider portfolio, with the highest Sharpe ratio among the non-native strategies while still accepting a large fraction of historical flow. 
The full AMM accepts all trades but exhibits lower ROI, likely because the observed RFQ flow contains substantial noise that a fully internalized LMSR-style implementation must absorb directly. Taken together, these results show that the empirical value of ParlayMarket pricing is strongest in the \emph{standalone market-maker} regime: the model can be inserted into an existing platform, improve LP portfolio quality, and remain competitive on accepted quotes.

\subsection{Summary}
Across both synthetic and historical evaluations, the results support the same conclusion. 
A shared, learned representation of the joint distribution allows ParlayMarket to quote a combinatorially large family of contracts while controlling aggregate loss, using parlay order flow as a direct signal for dependence learning. 
In synthetic settings this produces the predicted scaling and convergence behavior; in historical replay it yields a practically useful correlation-aware quoting engine with strong risk-adjusted performance that can be readily deployed.

\begin{table}[t]
\centering
\caption{Emulation on the March 7, 2026 NBA slate under the market-competitive filter.}
\label{tab:pool_sim}
\begin{tabular}{lccrrr}
\toprule
Strategy & Accepted Trades & MM Win Rate & Net PnL & Sharpe \\
\midrule
Kalshi (execution price)         & 5436 & 84.5\% & \$244{,}190.77 &  0.7151 \\
Standalone MM (\(\theta_{\rm corr}\))    & 3973 & 86.8\%  & \$113{,}984.58 &  1.0049 \\
Full AMM                             & 5436 & 69.0\% & \$158{,}577.66       & 0.7062 \\
\bottomrule
\end{tabular}
\end{table}




\section{Conclusion and limitations}
\label{sec:conclusion}

In this paper, we showed that parlay trading can be incorporated into an automated market maker without requiring a separate pricing and risk-management process for every combination. The mechanism supports \(2^M\) base and joint contracts using only an \(O(M^2)\) shared representation, and correspondingly achieves \(O(M^2)\) aggregate information-retrieval loss rather than loss that scales with the full combinatorial market space. This shared structure also allows parlay order flow to serve as a direct signal for learning dependence structure. The empirical results further suggest that the mechanism is not only theoretically well-founded but close to deployable in practice: historical-market replay on Kalshi data shows that, in the standalone market-maker setting, correlation-aware quoting remains competitive under observed prices, RFQ flow, and realistic execution constraints.

An important scope condition of the current framework is that the traded contracts are assumed to already be expressed in a structurally coherent binary form. In many practical settings, this representation must itself be constructed before correlation-aware pricing becomes meaningful. For example, families of mutually exclusive outcomes may satisfy constraints such as $\sum_{i=1}^k X_i = 1$, even though the underlying latent process remains stochastic. In such cases, a separate preprocessing or structural-learning layer is needed to map raw market data into variables on which the assumptions of the AMM hold. We illustrate one such preprocessing step in the evaluation section. This suggests a broader architecture for automated parlay markets: a structural layer that identifies and normalizes the relevant event representation, followed by a pricing layer such as ParlayMarket that provides automated execution and, as a secondary benefit, elicits dependence from the resulting trade flow.


\bibliographystyle{plainurl}
\bibliography{references}


\newpage
\appendix

\section{Additional material for \prettyref{sec:evaluation}}\label{app:plots}

\begin{table}[t]
  \centering
  \caption{Per-market loss decay in the complete-parlay regime.
           $N_M = 2^M - 1$; values averaged over 100 simulations.}
  \label{tab:per-market-ratios}
  \begin{tabular}{cccccc}
    \hline
    $M$ & $N_M$ & $\ell_M$ & $\ell_{M+1}/\ell_M$ & $\ell_M \cdot 2^M$ & $N_M \cdot \ell_M$ \\
    \hline
    4 & 15  & $\sim 0.0094$ & $\sim 0.49$ & $\sim 0.15$ & $\sim 0.14$ \\
    5 & 31  & $\sim 0.0048$ & $\sim 0.50$ & $\sim 0.15$ & $\sim 0.15$ \\
    6 & 63  & $\sim 0.0024$ & $\sim 0.50$ & $\sim 0.15$ & $\sim 0.15$ \\
    7 & 127 & $\sim 0.0012$ & $\sim 0.50$ & $\sim 0.15$ & $\sim 0.15$ \\
    8 & 255 & $\sim 0.0006$ & $\sim 0.50$ & $\sim 0.15$ & $\sim 0.15$ \\
    9 & 511 & $\sim 0.0003$ & ---         & $\sim 0.15$ & $\sim 0.15$ \\
    \hline
  \end{tabular}
\end{table}

\begin{figure}[t]
  \centering
  \includegraphics[width=0.7\columnwidth]{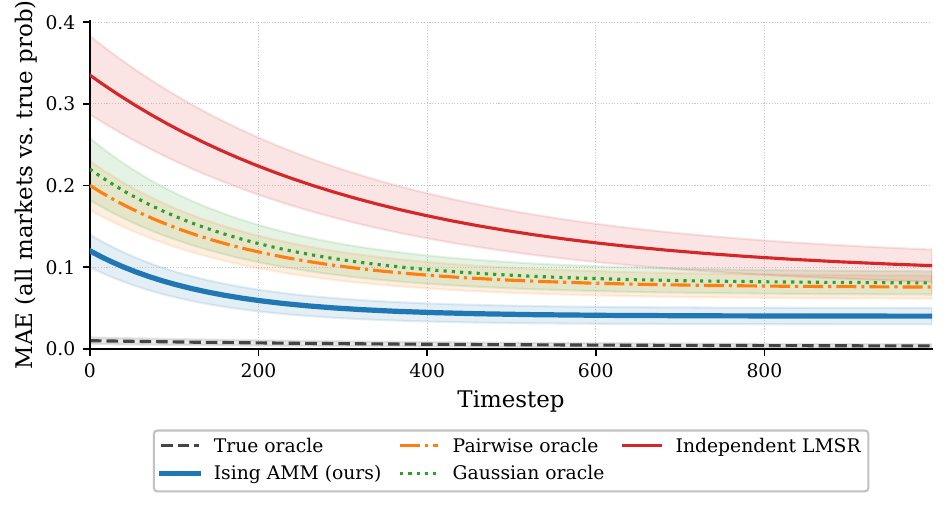}
  \caption{Price convergence over time (mean $\pm 1\sigma$ across 1000 simulations). LMSR market-price MAE vs.\ true probabilities.}
  \label{fig:price-convergence}
\end{figure}

\begin{figure}[t]
  \centering
  \begin{subfigure}[t]{0.48\columnwidth}
    \includegraphics[width=\columnwidth]{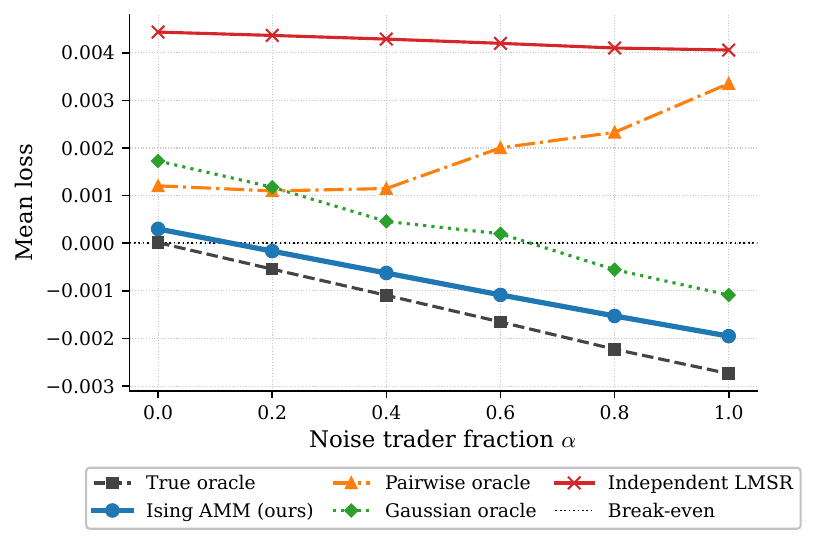}
    \caption{Per-market loss vs.\ noise trader fraction $\alpha$ ($M = 9$).}
    \label{fig:noise-loss}
  \end{subfigure}
  \hfill
  \begin{subfigure}[t]{0.48\columnwidth}
    \includegraphics[width=\columnwidth]{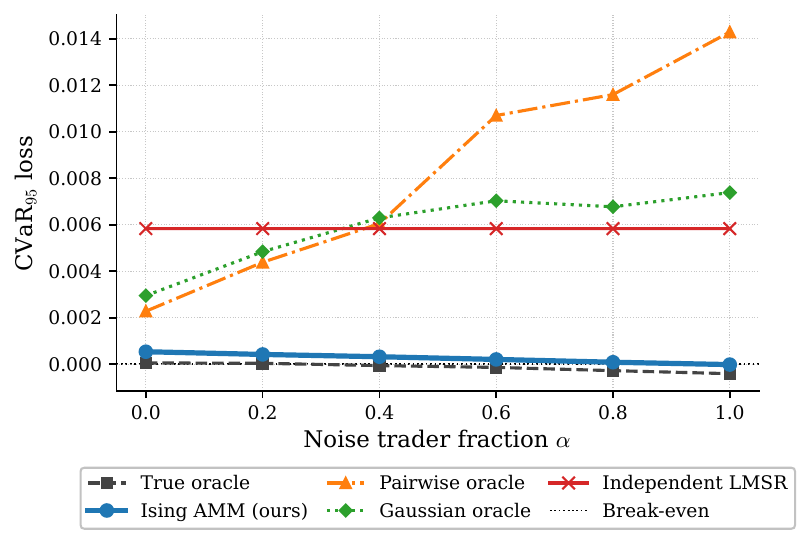}
    \caption{$\mathrm{CVaR}_{95}$ loss vs.\ noise trader fraction $\alpha$ ($M = 9$).}
    \label{fig:noise-cvar}
  \end{subfigure}
  \caption{Per-market loss under varying noise-trader fractions ($M = 9$,
           extrapolated from $M \in \{3,4,5\}$ simulations).
           The ParlayMarket remains the best-performing non-oracle model
           across all noise levels; the break-even line is crossed only at
           high noise fractions.}
\end{figure}
\section{RFQ Details}\label{rfq_details}
\subsection{Combo Mispricing for Independent Events}\label{app:rfq-misprice}
Our experiments show that the RFQ combo market fails Property~3 in a basic sense: combo prices do not reliably encode joint probabilities even in independent cases. We consider two-leg combos whose legs are drawn from different leagues (NBA and ATP tennis) so that the events are naturally close to independent. We fetch the base market prices for legs and compare them to the best quote we receive from the combo market for the 2-leg combo. Detailed quotes and prices for this experiment are shown in Table~\ref{tab:independent-combo-mispricing}. 

Under probabilistic pricing, the combo YES price should be approximately the product of the two leg YES prices. All 33 quoted prices exceed the independence benchmark. Moreover, and more decisively, 13 of the 33 quotes (39\%) violate the Fr\'{e}chet upper bound $P(A \cap B) \leq \min(P(A), P(B))$---a constraint that holds under \emph{any} joint probability distribution, regardless of correlation structure (Table~\ref{tab:frechet-violations}).

\begin{table}[t]                                                                                                                 
  \centering                                                                                                                       
  \caption{Quoted combo YES prices versus probabilistic benchmark for two-leg combos with plausibly independent legs (NBA and ATP
  tennis).}                                                                                                                        
  \label{tab:independent-combo-mispricing}
  \begin{tabular}{ccccc}                                                                                                           
  \toprule        
  Leg\,1 & Leg\,2 & $p_1 \cdot p_2$ & Quoted & $\Delta$ \\
  \midrule
  72c & 88c & 63.4c & 72c & 8.6c \\
  72c & 37c & 26.6c & 70c & 43.4c \\
  72c & 15c & 10.8c & 31c & 20.2c \\
  72c & 14c & 10.1c & 30c & 19.9c \\
  72c & 41c & 29.5c & 50c & 20.5c \\
  72c & 20c & 14.4c & 34c & 19.6c \\
  72c & 82c & 59.0c & 63c & 4.0c \\
  72c & 79c & 56.9c & 62c & 5.1c \\
  72c & 17c & 12.2c & 14c & 1.8c \\
  72c & 42c & 30.2c & 31c & 0.8c \\
  72c & 30c & 21.6c & 25c & 3.4c \\
  
  29c & 88c & 25.5c & 29c & 3.5c \\
  29c & 37c & 10.7c & 29c & 18.3c \\
  29c & 15c & 4.4c & 29c & 24.7c \\
  29c & 14c & 4.1c & 24c & 19.9c \\
  29c & 41c & 11.9c & 29c & 17.1c \\
  29c & 20c & 5.8c & 26c & 20.2c \\
  29c & 82c & 23.8c & 27c & 3.2c \\
  29c & 79c & 22.9c & 26c & 3.1c \\
  29c & 17c & 4.9c & 6c & 1.1c \\
  29c & 42c & 12.2c & 13c & 0.8c \\
  29c & 30c & 8.7c & 29c & 20.3c \\
  
  82c & 88c & 72.2c & 82c & 9.8c \\
  82c & 37c & 30.3c & 79c & 48.7c \\
  82c & 15c & 12.3c & 80c & 67.7c \\
  82c & 14c & 11.5c & 80c & 68.5c \\
  82c & 41c & 33.6c & 35c & 1.4c \\
  82c & 20c & 16.4c & 80c & 63.6c \\
  82c & 82c & 67.2c & 72c & 4.8c \\
  82c & 79c & 64.8c & 70c & 5.2c \\
  82c & 17c & 13.1c & 14c & 0.9c \\
  82c & 42c & 34.4c & 35c & 0.6c \\
  82c & 30c & 24.6c & 45c & 20.4c \\
  \bottomrule
  \end{tabular}
\end{table}

\begin{table}[t]
  \centering
  \caption{Combo quotes violating the Fr\'{e}chet upper bound $P(A \cap B) \leq \min(P(A), P(B))$, sorted by severity. These prices
   are incompatible with \emph{any} joint probability distribution, regardless of correlation.}
  \label{tab:frechet-violations}
  \begin{tabular}{cccccc}
  \toprule
  Leg\,1 & Leg\,2 & $p_1 \cdot p_2$ & $\min(p_1,p_2)$ & Quoted & $Q / \min$ \\
  \midrule
  82c & 14c & 11.5c & 14c & 80c & $5.71\times$ \\
  82c & 15c & 12.3c & 15c & 80c & $5.33\times$ \\
  82c & 20c & 16.4c & 20c & 80c & $4.00\times$ \\
  82c & 37c & 30.3c & 37c & 79c & $2.14\times$ \\
  72c & 14c & 10.1c & 14c & 30c & $2.14\times$ \\
  72c & 15c & 10.8c & 15c & 31c & $2.07\times$ \\
  29c & 15c & 4.4c & 15c & 29c & $1.93\times$ \\
  72c & 37c & 26.6c & 37c & 70c & $1.89\times$ \\
  82c & 30c & 24.6c & 30c & 45c & $1.50\times$ \\
  72c & 20c & 14.4c & 20c & 34c & $1.70\times$ \\
  29c & 14c & 4.1c & 14c & 24c & $1.71\times$ \\
  72c & 41c & 29.5c & 41c & 50c & $1.22\times$ \\
  29c & 20c & 5.8c & 20c & 26c & $1.30\times$ \\
  \bottomrule
  \end{tabular}
\end{table}

\subsection{Quote Aggregation to Tighten Spread}
\label{app:rfq-quote-agg}
We test whether multiple market makers materially tighten combo quotes. Let \(S1\) denote the best single-maker spread and \(S2\) the mixed spread obtained by taking the best YES quote and best NO quote across all market makers. From Table~\ref{tab:combo-quote-quality} the average \(S1\) is 12.08c and the average \(S2\) is 11.89c, so quote aggregation improves execution by only 0.19c on average. This pattern also holds across leg counts shown on Table~\ref{tab:combo-leg-count}: for 2-leg combos, average spreads remain very wide (26.02c for \(S1\), 24.30c for \(S2\)); for 3+ leg combos, the improvement is negligible. Overall, the presence of multiple market makers does not generate meaningful competitive tightening in combo markets.

\begin{table}[t]
\centering
\caption{Combo-market quote quality on Kalshi for markets with open interest between 1{,}000 and 2{,}000. The gain from split quoting is negligible.}
\label{tab:combo-quote-quality}
\begin{tabular}{lc}
\toprule
Metric & Value \\
\midrule
Average quotes per combo & 4.5 \\
Best single quote spread (\(S1\)) & 12.08c \\
Best mixed quote spread (\(S2\)) & 11.89c \\
Mixed-quote improvement (\(S1 - S2\)) & 0.19c \\
\bottomrule
\end{tabular}
\end{table}

\begin{table}[t]
\centering
\caption{Combo-market spreads by leg count on Kalshi.}
\label{tab:combo-leg-count}
\begin{tabular}{lccc}
\toprule
Leg count & Count & Avg.\ \(S1\) (cents) & Avg.\ \(S2\) (cents) \\
\midrule
2   & 43  & 26.02 & 24.30 \\
3   & 58  & 22.19 & 22.19 \\
4   & 56  & 14.80 & 14.73 \\
5+  & 392 & 8.66  & 8.60 \\
\bottomrule
\end{tabular}
\end{table}

\section{Mathematical Proofs}
\label{app:parlaymarket-proofs}

This appendix supplies proofs and supporting derivations for the claims stated in
Section~\ref{sec:correlation-mm}. In particular, it supports the Property~1 bound in
Subsection~\ref{subsec:amm-architecture}, Theorem~\ref{thm:linear-conv},
Proposition~\ref{prop:strong-convex}, Lemma~\ref{lem:quad-sensitivity},
Proposition~\ref{prop:per-market-decay}, and Proposition~\ref{prop:parlay-acceleration}.
Throughout this appendix we write $\phi$ for the parameter vector
$\boldsymbol{\varphi}$ from Section~\ref{sec:correlation-mm}.

\subsection{Proof supporting Property~1 in Subsection~\ref{subsec:amm-architecture}}\label{app:proof-prop-1}

\begin{proof}[Proof of the liquidity-consistency bound in the Property~1 paragraph of Subsection~\ref{subsec:amm-architecture}]
Fix any market $S$ and let $z_S$ denote the YES-minus-NO share imbalance in the
corresponding two-outcome LMSR book. With common liquidity parameter $b$, the LMSR YES
price is
\[
  p_S(z_S)
  =
  \frac{e^{z_S/b}}{1+e^{z_S/b}}.
\]
After a small YES order of size $x$, the new imbalance is $z_S+x$, so the local price impact is
\[
  I_S(x;\phi)
  :=
  p_S(z_S+x)-p_S(z_S).
\]
Since
\[
  p_S'(z_S)
  =
  \frac{1}{b}\,p_S(z_S)\bigl(1-p_S(z_S)\bigr),
\]
Taylor expansion gives
\[
  I_S(x;\phi)
  =
  \frac{x}{b}\,p_S^{\phi}\bigl(1-p_S^{\phi}\bigr)+O(x^2),
\]
which is the local-impact formula displayed in Subsection~\ref{subsec:amm-architecture}.

Assume now that all active quotes lie in the operating region
\[
  p_S^{\phi}\in[\varepsilon,1-\varepsilon]
  \qquad \text{for every active }S.
\]
Then
\[
  \varepsilon(1-\varepsilon)
  \le
  p_S^{\phi}\bigl(1-p_S^{\phi}\bigr)
  \le
  \frac14.
\]
Hence the directional derivative of price impact at the origin satisfies
\[
  \frac{\partial_x I_S(0;\phi)}{\partial_x I_{S'}(0;\phi)}
  =
  \frac{p_S^{\phi}(1-p_S^{\phi})}{p_{S'}^{\phi}(1-p_{S'}^{\phi})}
  \in
  \bigl[4\varepsilon(1-\varepsilon),\,1/(4\varepsilon(1-\varepsilon))\bigr].
\]
By continuity, the same comparability holds for all sufficiently small admissible trades $x$.
Thus small orders induce comparable local price impact across base and parlay contracts. Because
all quoted probabilities are functions of the same shared state $\phi$, every real trade also shifts
all related prices immediately, which is exactly the liquidity-consistency claim made in the
Property~1 paragraph of Subsection~\ref{subsec:amm-architecture}.
\end{proof}

\subsection{Proofs for Subsection~\ref{subsec:convergence}}

\subparagraph{Common notation.}
Let
\[
\mathcal E_2
:=
\bigl\{\{i\}:1\le i\le M\bigr\}
\cup
\bigl\{\{i,j\}:1\le i<j\le M\bigr\}.
\]
For a traded event $E\subseteq [M]$, write
\[
q_E(\phi)
:=
P_\phi\!\Big(\bigcap_{i\in E}\{X_i=1\}\Big),
\qquad
p_E^\ast
:=
P^\ast\!\Big(\bigcap_{i\in E}\{X_i=1\}\Big),
\]
and
\[
\ell_E(\phi):=\mathrm{CE}(p_E^\ast,q_E(\phi))
=
-p_E^\ast \log q_E(\phi) - (1-p_E^\ast)\log(1-q_E(\phi)).
\]
If the market corresponding to $E$ is sampled with probability $\lambda_E>0$, define
\[
L(\phi):=\sum_{E\in\mathcal E_2}\lambda_E\,\ell_E(\phi).
\]
This is the same singleton-and-pairwise composite objective as $\mathcal{L}$ in
\eqref{eq:pseudo-loss}, written with a shorter symbol in the appendix. Let
\[
T(X):=\bigl(X_1,\dots,X_M,\,(X_iX_j)_{1\le i<j\le M}\bigr)
\]
denote the vector of sufficient statistics. We assume throughout that the target moment vector
\[
m^\ast := \bigl(p_i^\ast,\; p_{ij}^\ast\bigr)_{1\le i<j\le M}
\]
lies in the interior of the pairwise marginal polytope, so that it is realizable by a unique
pairwise Ising parameter $\phi^\star$.

\subparagraph{Expected gradient of the sampled update.}
The sampled update in \eqref{eq:sgd-update} is exactly the stochastic-gradient step for the
composite loss \eqref{eq:pseudo-loss}. Indeed, for any traded event $E$,
\[
\nabla_\phi \ell_E(\phi)
=
\frac{q_E(\phi)-p_E^\ast}{q_E(\phi)(1-q_E(\phi))}\,\nabla_\phi q_E(\phi).
\]
By the covariance identity used in Subsection~\ref{subsec:convergence},
\[
\nabla_\phi q_E(\phi)
=
\operatorname{Cov}_\phi\!\bigl(T(X),\mathbf 1_E\bigr)
=
q_E(\phi)\Bigl(\mathbb E_\phi[T(X)\mid E]-\mathbb E_\phi[T(X)]\Bigr).
\]
Hence
\[
\nabla_\phi \ell_E(\phi)
=
\frac{q_E(\phi)-p_E^\ast}{1-q_E(\phi)}
\Bigl(\mathbb E_\phi[T(X)\mid E]-\mathbb E_\phi[T(X)]\Bigr),
\]
which is exactly the vector form of the marketwise update described after \eqref{eq:sgd-update}.
Therefore, if $E_t$ denotes the event traded at round $t$, then
\[
\phi_{t+1}=\phi_t-\eta g_t,
\qquad
\mathbb E[g_t\mid \phi_t]=\nabla L(\phi_t).
\]

\begin{proof}[Proof of the moment-matched target claim in Subsection~\ref{subsec:convergence}]
Because the mean dynamics are gradient descent on $L$, it suffices to identify the unique
minimizer of $L$. Let $\phi^\star$ be the unique Ising parameter whose singleton and pairwise
moments equal $(p_i^\ast,p_{ij}^\ast)$. Such a parameter exists and is unique because the
pairwise Ising family is a finite exponential family with strictly convex log-partition function,
so the map from natural parameters to interior moments is one-to-one.

For every $E\in\mathcal E_2$,
\[
\ell_E(\phi)-\ell_E(\phi^\star)
=
\mathrm{CE}(p_E^\ast,q_E(\phi))-\mathrm{CE}(p_E^\ast,p_E^\ast)
=
\mathrm{KL}\!\bigl(\mathrm{Bern}(p_E^\ast)\,\|\,\mathrm{Bern}(q_E(\phi))\bigr)
\ge 0.
\]
Summing with weights $\lambda_E$ gives
\[
L(\phi)-L(\phi^\star)
=
\sum_{E\in\mathcal E_2}\lambda_E\,
\mathrm{KL}\!\bigl(\mathrm{Bern}(p_E^\ast)\,\|\,\mathrm{Bern}(q_E(\phi))\bigr)
\ge 0.
\]
Equality holds if and only if $q_E(\phi)=p_E^\ast$ for every traded event $E$, that is, if and
only if $\phi$ matches all singleton and pairwise moments. By injectivity of the Ising moment
map on the interior of the marginal polytope, this happens only at $\phi=\phi^\star$.

Thus $\phi^\star$ is the unique global minimizer of $L$. Since the mean update is
$-\eta \nabla L(\phi)$, $\phi^\star$ is also the unique fixed point of the mean dynamics. Finally,
the maximum-entropy characterization in the moment-matched target paragraph follows from the
standard Lagrange-duality argument for finite exponential families: among all distributions on
$\{0,1\}^M$ with the prescribed first and second moments, the entropy maximizer has density
proportional to $\exp(\langle \phi^\star,T(x)\rangle)$, which is exactly the pairwise Ising law.
\end{proof}

\begin{proof}[Proof of Proposition~\ref{prop:strong-convex} (local strong-convexity form used in the convergence argument)]
Let
\[
m(\phi):=\bigl(q_E(\phi)\bigr)_{E\in\mathcal E_2}\in\mathbb R^d,
\qquad
J(\phi):=\nabla_\phi m(\phi)\in\mathbb R^{d\times d}.
\]
Write $L=F\circ m$, where
\[
F(u):=\sum_{E\in\mathcal E_2}\lambda_E\,\mathrm{CE}(p_E^\ast,u_E).
\]
Since
\[
\frac{\partial}{\partial u_E}\mathrm{CE}(p_E^\ast,u_E)
=
\frac{u_E-p_E^\ast}{u_E(1-u_E)},
\qquad
\frac{\partial^2}{\partial u_E^2}\mathrm{CE}(p_E^\ast,u_E)
=
\frac{1}{u_E(1-u_E)},
\]
the chain rule gives
\[
\nabla^2 L(\phi)
=
J(\phi)^\top D(\phi)J(\phi)
+
\sum_{E\in\mathcal E_2}\lambda_E
\frac{q_E(\phi)-p_E^\ast}{q_E(\phi)(1-q_E(\phi))}
\,\nabla_\phi^2 q_E(\phi),
\]
where
\[
D(\phi)
=
\operatorname{diag}\!\Bigl(\frac{\lambda_E}{q_E(\phi)(1-q_E(\phi))}\Bigr)_{E\in\mathcal E_2}.
\]
At $\phi=\phi^\star$ we have $q_E(\phi^\star)=p_E^\ast$ for every $E$, so the second term
vanishes and therefore
\[
\nabla^2 L(\phi^\star)
=
J(\phi^\star)^\top D^\star J(\phi^\star),
\qquad
D^\star
=
\operatorname{diag}\!\Bigl(\frac{\lambda_E}{p_E^\ast(1-p_E^\ast)}\Bigr)_{E\in\mathcal E_2}.
\]
Assume now that $J(\phi^\star)$ has full rank and that $p_E^\ast\in[\varepsilon,1-\varepsilon]$
for every traded event $E$. Then $D^\star\succ 0$, hence $\nabla^2 L(\phi^\star)\succ 0$. Let
\[
\mu_0:=\lambda_{\min}\!\bigl(\nabla^2 L(\phi^\star)\bigr)>0.
\]
Because $\nabla^2L$ is continuous, there exists a neighborhood $U$ of $\phi^\star$ such that
\[
\nabla^2L(\phi)\succeq \frac{\mu_0}{2}I
\qquad \text{for all }\phi\in U.
\]
By the Hessian characterization of strong convexity, $L$ is $\mu$-strongly convex on $U$ with
$\mu=\mu_0/2$.

This is the local strong-convexity statement needed in the proof of
Theorem~\ref{thm:linear-conv}. The explicit Hessian values quoted in
Proposition~\ref{prop:strong-convex} at the uninformative initialization $\phi=0$ are obtained by
evaluating the same Fisher blocks under the uniform product measure.
\end{proof}

\begin{proof}[Proof of Theorem~\ref{thm:linear-conv}]
Let $U$ be the neighborhood from the previous proof, and assume that all iterates remain in
$U$. Write
\[
e_t:=\phi_t-\phi^\star.
\]
Because the sampled gradient is unbiased, we may write
\[
g_t=\nabla L(\phi_t)+\zeta_t,
\qquad
\mathbb E[\zeta_t\mid \phi_t]=0,
\qquad
\mathbb E[\|\zeta_t\|^2\mid \phi_t]\le \sigma^2.
\]
Let $L_g$ denote a gradient-Lipschitz constant of $L$ on $U$. Then
\[
e_{t+1}=e_t-\eta \nabla L(\phi_t)-\eta \zeta_t.
\]
Taking conditional expectations and using $\mathbb E[\zeta_t\mid \phi_t]=0$,
\[
\mathbb E\!\left[\|e_{t+1}\|^2\mid \phi_t\right]
=
\|e_t\|^2
-2\eta\langle \nabla L(\phi_t),e_t\rangle
+\eta^2\|\nabla L(\phi_t)\|^2
+\eta^2\mathbb E\!\left[\|\zeta_t\|^2\mid \phi_t\right].
\]
Local strong convexity gives
\[
\langle \nabla L(\phi_t),e_t\rangle \ge \mu \|e_t\|^2,
\]
and $L_g$-smoothness gives
\[
\|\nabla L(\phi_t)\|
=
\|\nabla L(\phi_t)-\nabla L(\phi^\star)\|
\le L_g \|e_t\|.
\]
Hence
\[
\mathbb E\!\left[\|e_{t+1}\|^2\mid \phi_t\right]
\le
\bigl(1-2\eta\mu+\eta^2 L_g^2\bigr)\|e_t\|^2+\eta^2\sigma^2.
\]
Set
\[
\rho:=1-2\eta\mu+\eta^2L_g^2.
\]
Taking full expectations and iterating the recursion yields
\[
\mathbb E\|e_t\|^2
\le
\rho^t \|e_0\|^2
+
\eta^2\sigma^2\sum_{s=0}^{t-1}\rho^s
\le
\rho^t \|e_0\|^2
+
\frac{\eta^2\sigma^2}{1-\rho}
=
\rho^t \|e_0\|^2
+
\frac{\eta\sigma^2}{2\mu-\eta L_g^2}.
\]
This is the precise finite-$\eta$ estimate underlying the display in
Theorem~\ref{thm:linear-conv}.

If $\eta\le \mu/L_g^2$, then $\rho\le 1-\eta\mu$, so
\[
\mathbb E\|e_t\|^2
\le
(1-\eta\mu)^t\|e_0\|^2+\frac{2\eta\sigma^2}{\mu}.
\]
In the small-step regime,
\[
\rho=1-2\eta\mu+O(\eta^2),
\qquad
\frac{\eta\sigma^2}{2\mu-\eta L_g^2}
=
\frac{\eta\sigma^2}{2\mu}+O(\eta^2),
\]
which is exactly the simplified form quoted in Subsection~\ref{subsec:convergence}.
\end{proof}

\begin{proof}[Proof of the $O(\log T)$ cumulative-regret statement following \eqref{eq:mu-eff}]
Let $L_t(\phi):=\ell_{E_t}(\phi)$ be the random round-$t$ loss, where $E_t$ is the market
sampled at round $t$. Because $E_t$ is sampled with probabilities $(\lambda_E)_{E\in\mathcal E_2}$,
\[
\mathbb E[L_t(\phi_t)-L_t(\phi^\star)\mid \phi_t]
=
L(\phi_t)-L(\phi^\star).
\]
Now take the decaying step size $\eta_t=c/t$. From the same recursion as in the proof of
Theorem~\ref{thm:linear-conv},
\[
\mathbb E\|\phi_t-\phi^\star\|^2\le \frac{C}{t}
\]
for some constant $C>0$. Since $L$ is $L_g$-smooth and $\nabla L(\phi^\star)=0$,
\[
L(\phi_t)-L(\phi^\star)
\le \frac{L_g}{2}\|\phi_t-\phi^\star\|^2.
\]
Therefore
\[
\mathbb E\!\left[\sum_{t=1}^T \bigl(L_t(\phi_t)-L_t(\phi^\star)\bigr)\right]
=
\sum_{t=1}^T \mathbb E\bigl[L(\phi_t)-L(\phi^\star)\bigr]
\le
\frac{L_g C}{2}\sum_{t=1}^T \frac{1}{t}
\le
\frac{L_g C}{2}(1+\log T).
\]
Thus the expected cumulative excess loss is $O(\log T)$, exactly as stated after
\eqref{eq:mu-eff}.
\end{proof}

\subsection{Proofs for Subsection~\ref{subsec:per-market-decay}}

\begin{proof}[Proof of Lemma~\ref{lem:quad-sensitivity}]
Fix a market $S$ and define
\[
 f(p):=\mathrm{KL}(p_S^\ast\|p)
 =
 p_S^\ast\log\frac{p_S^\ast}{p}
 +(1-p_S^\ast)\log\frac{1-p_S^\ast}{1-p}.
\]
Then $f(p_S^\ast)=0$ and $f'(p_S^\ast)=0$, while
\[
 f''(p)
 =
 \frac{p_S^\ast}{p^2}+\frac{1-p_S^\ast}{(1-p)^2}.
\]
In particular,
\[
 f''(p_S^\ast)=\frac{1}{p_S^\ast(1-p_S^\ast)}.
\]
By Taylor's theorem, for $p$ in a sufficiently small neighborhood of $p_S^\ast$,
\[
 \mathrm{KL}(p_S^\ast\|p)
 =
 \frac{(p-p_S^\ast)^2}{2p_S^\ast(1-p_S^\ast)} + O\!\bigl(|p-p_S^\ast|^3\bigr).
\]
Next apply the mean-value theorem to the map $\phi\mapsto p_S^\phi$. For $\phi$ in a small
neighborhood $U$ of $\phi^\star$,
\[
 |p_S^\phi-p_S^\ast|
 \le
 \sup_{\tilde\phi\in U}\|\nabla p_S^{\tilde\phi}\|\,\|\phi-\phi^\star\|.
\]
Substituting this into the quadratic expansion above gives
\[
 b\,\mathrm{KL}(p_S^\ast\|p_S^\phi)
 \le
 \frac{b}{2}
 \left(
 \sup_{\tilde\phi\in U}
 \frac{\|\nabla p_S^{\tilde\phi}\|^2}{p_S^\ast(1-p_S^\ast)}
 \right)
 \|\phi-\phi^\star\|^2
 + O\!\bigl(\|\phi-\phi^\star\|^3\bigr).
\]
After shrinking $U$ if necessary, the cubic remainder is dominated by the quadratic term, so the
local bound in Lemma~\ref{lem:quad-sensitivity} follows with
\[
 \Lambda_S
 :=
 \sup_{\tilde\phi\in U}
 \frac{\|\nabla p_S^{\tilde\phi}\|^2}{p_S^\ast(1-p_S^\ast)}.
\]
The value $\Lambda_S=1/4$ quoted in Subsection~\ref{subsec:per-market-decay} is the
corresponding baseline constant obtained by evaluating the same gradient expression at the
symmetric initialization $\phi=0$.
\end{proof}

\begin{proof}[Proof of Proposition~\ref{prop:per-market-decay} and of the mean-loss corollary stated afterward]
For the complete-parlay regime, let
\[
N_M:=2^M-1.
\]
Let $\ell_t$ denote the expected LMSR monetary loss incurred at round $t$ by the market that is
actually traded. Define the family-average quadratic sensitivity
\[
\bar\Lambda_M
:=
\frac{1}{N_M}
\sum_{\varnothing\neq S\subseteq[M]}
\sup_{\phi\in U}
\frac{\|\nabla p_S^\phi\|^2}{p_S^\ast(1-p_S^\ast)}.
\]
Assume the following three properties hold with constants independent of $M$:
\begin{align*}
\text{(A1)}\qquad
\mathbb E\|\phi_t-\phi^\star\|^2
&\le
C_0 \|\phi^\star\|^2 e^{-c\eta t}+C_1\eta,
\\
\text{(A2)}\qquad
\bar\Lambda_M
&\le C_2,
\\
\text{(A3)}\qquad
\|\phi^\star\|^2
&=O(M^2\rho^2).
\end{align*}
Condition (A1) is the transient-plus-noise-floor estimate inherited from
Theorem~\ref{thm:linear-conv}; (A2) is the uniform average sensitivity bound needed to convert
parameter error into monetary loss through Lemma~\ref{lem:quad-sensitivity}; and (A3) is the
weak-correlation scaling of the target interaction vector used in the text.

For the market $S_t$ traded at round $t$, Lemma~\ref{lem:quad-sensitivity} gives
\[
\mathbb E[\ell_t\mid S_t=S]
\le
\frac{b}{2}
\left(
\sup_{\phi\in U}
\frac{\|\nabla p_S^\phi\|^2}{p_S^\ast(1-p_S^\ast)}
\right)
\mathbb E\|\phi_t-\phi^\star\|^2.
\]
Averaging over the uniformly sampled market family therefore yields
\[
\mathbb E[\ell_t]
\le
\frac{b\bar\Lambda_M}{2}\,
\mathbb E\|\phi_t-\phi^\star\|^2
\le
\frac{b C_2}{2}\Bigl(C_0\|\phi^\star\|^2 e^{-c\eta t}+C_1\eta\Bigr).
\]

Summing the transient term over all $t\ge 0$ and using
$\sum_{t\ge 0}e^{-c\eta t}=O(1/\eta)$, we obtain
\[
L_{\mathrm{transient}}
\le
\sum_{t\ge 0}\frac{bC_0C_2}{2}\|\phi^\star\|^2 e^{-c\eta t}
=
O\!\left(\frac{b\|\phi^\star\|^2}{\eta}\right)
=
O\!\left(\frac{b\rho^2 M^2}{\eta}\right),
\]
which is the transient bound stated in Proposition~\ref{prop:per-market-decay}.

The steady-state per-round loss is obtained from the $C_1\eta$ term:
\[
L_{\mathrm{ss,round}}
=
O(b\eta).
\]
This is the second display in Proposition~\ref{prop:per-market-decay}.

Finally, the mean loss per listed market discussed at the end of
Subsection~\ref{subsec:per-market-decay} satisfies
\[
\bar\ell_M
=
\frac{1}{N_M T}
\sum_{t=1}^T \mathbb E[\ell_t]
\le
\frac{L_{\mathrm{transient}}}{N_M T}
+
\frac{L_{\mathrm{ss,round}}}{N_M}.
\]
Hence
\[
\bar\ell_M
=
O\!\left(\frac{b\rho^2M^2}{\eta N_M T}\right)
+
O\!\left(\frac{b\eta}{N_M}\right).
\]
For $T\to\infty$, the steady-state term dominates, so
\[
\bar\ell_M
=
O\!\left(\frac{b\eta}{N_M}\right)
=
O(b\eta\,2^{-M}).
\]
Since $N_M=2^M-1$,
\[
\frac{\bar\ell_{M+1}}{\bar\ell_M}
\longrightarrow
\frac{N_M}{N_{M+1}}
=
\frac{2^M-1}{2^{M+1}-1}
\longrightarrow \frac12.
\]
This proves the halving-law corollary stated after Proposition~\ref{prop:per-market-decay}.
\end{proof}

\subparagraph{Remark on shadow trades.}
The previous proof shows exactly where shadow trades enter the argument in
Subsection~\ref{subsec:per-market-decay}: they are part of the mechanism that makes a
uniform-in-$M$ bound such as (A1) plausible. Without shadow repricing, parameters are updated only
from direct order flow in their own markets, so the steady-state term in (A1) need not remain
$O(\eta)$ uniformly in $M$; once that uniform bound fails, the aggregate loss guarantee in
Proposition~\ref{prop:per-market-decay} can fail as well.

\subsection{Proofs for Subsection~\ref{subsec:parlay-acceleration}}

\begin{proof}[Proof of Proposition~\ref{prop:parlay-acceleration}]
Let
\[
\mathcal E_+
:=
\mathcal E_2
\cup
\{S\subseteq [M]: |S|\ge 3,\ \lambda_S>0\},
\]
and define the augmented composite loss
\[
L^+(\phi):=\sum_{E\in \mathcal E_+}\lambda_E\,\mathrm{CE}(p_E^\ast,q_E(\phi)).
\]
This is the appendix shorthand for the augmented objective $\mathcal L^+$ used in
Proposition~\ref{prop:parlay-acceleration}. Assume there exists a parameter $\phi^\dagger$ such that
\[
q_E(\phi^\dagger)=p_E^\ast
\qquad \text{for every }E\in \mathcal E_+,
\]
and that the augmented Jacobian $J_+(\phi^\dagger)$ has full rank.

Exactly as in the proof of Proposition~\ref{prop:strong-convex}, the Hessian at $\phi^\dagger$ is
\[
\nabla^2L^+(\phi^\dagger)
=
\sum_{E\in \mathcal E_+}
\frac{\lambda_E}{p_E^\ast(1-p_E^\ast)}
\,\nabla q_E(\phi^\dagger)\nabla q_E(\phi^\dagger)^\top.
\]
Split the sum into baseline events and higher-order events:
\[
\nabla^2L^+(\phi^\dagger)
=
\nabla^2L(\phi^\dagger)
+
\sum_{\substack{S\subseteq [M]\\ |S|\ge 3}}
\frac{\lambda_S}{p_S^\ast(1-p_S^\ast)}
\,\nabla q_S(\phi^\dagger)\nabla q_S(\phi^\dagger)^\top.
\]
Every added term is positive semidefinite, so
\[
\nabla^2L^+(\phi^\dagger)-\nabla^2L(\phi^\dagger)\succeq 0.
\]
Hence the local strong-convexity constant cannot decrease:
\[
\mu_{\mathrm{eff}}\ge \mu.
\]
Moreover, if one of the added rank-one terms has nonzero projection on a weakest-curvature
eigendirection of $\nabla^2L(\phi^\dagger)$, then $\mu_{\mathrm{eff}}>\mu$, which is exactly the
curvature improvement summarized in \eqref{eq:mu-eff} and in item~1 of
Proposition~\ref{prop:parlay-acceleration}.

Applying Theorem~\ref{thm:linear-conv} with $L^+$ in place of $L$ yields the improved contraction
factor and the smaller steady-state error floor in item~2. Averaging the resulting bound over
$t\le T$ gives
\[
\bar\varepsilon_T^2
:=
\frac1T\sum_{t=1}^T \mathbb E\|\phi_t-\phi^\dagger\|^2
\le
C\left(
\frac{\|\phi_0-\phi^\dagger\|^2}{\eta \mu_{\mathrm{eff}} T}
+\eta
\right)
\]
for a constant $C$ depending only on local smoothness and variance parameters. Therefore any
quadratic price--loss bound of the form
\[
\ell_{\mathrm{mon}}(t)\le C_b \|\phi_t-\phi^\dagger\|^2
\]
immediately implies the improved average-loss conclusion in item~3.
\end{proof}

\section{Multinomial Gaussian Scoring Model}
\label{app:multinomial}

This appendix describes the generalization of the binary Gaussian scoring model
(Section~3.2) to categorical markets used in Section~\ref{subsec:eval-multinomial}.

\subparagraph{Categorical markets.}
We consider $M$ base markets where market $i$ can resolve to one of $K_i \geq 2$
mutually exclusive outcomes.
The experiment in Section~\ref{subsec:eval-multinomial} uses two configurations:
$\mathbf{K} = [2,3,4,4,5]$ and $\mathbf{K} = [2,3,3,4,4,5,6,7]$,
giving total outcome spaces of $\prod_i K_i = 480$ and $241{,}920$ respectively.

\subparagraph{Score process and threshold model.}
Each market $i$ is driven by an underlying score $S_i(t) \in \mathbb{R}$ following
correlated arithmetic Brownian motion
\begin{equation}
  dS_i(t) = \sigma\, dW_i(t), \qquad
  \mathrm{Cov}(dW_i, dW_j) = \rho_{ij}\,dt,
  \label{eq:multi-score}
\end{equation}
with constant volatility $\sigma = 1$ and pairwise correlation
$\rho_{ij} = \rho = 0.3$ for $i \neq j$ (equicorrelation structure).

The outcome of market $i$ is determined by $K_i - 1$ ordered thresholds
$\tau_i^1 < \cdots < \tau_i^{K_i - 1}$:
\begin{equation}
  X_i = k \iff S_i(T) \in \bigl(\tau_i^{k-1},\, \tau_i^k\bigr],
  \label{eq:multi-outcome}
\end{equation}
where we set $\tau_i^0 = -\infty$ and $\tau_i^{K_i} = +\infty$.
Thresholds are initialized so that all $K_i$ categories are equiprobable at $t = 0$:
\begin{equation}
  \tau_i^k = \sigma\sqrt{T}\,\Phi^{-1}\!\left(\tfrac{k}{K_i}\right),
  \quad k = 1, \ldots, K_i - 1.
  \label{eq:multi-thresholds}
\end{equation}

\subparagraph{Informed trader signals.}
Given the current score $S_i(t) = s$ and time remaining $\tau = T - t$,
the conditional probability of each outcome is
\begin{equation}
  P(X_i = k \mid S_i(t) = s)
    = \Phi\!\left(\frac{\tau_i^k - s}{\sigma\sqrt{\tau}}\right)
    - \Phi\!\left(\frac{\tau_i^{k-1} - s}{\sigma\sqrt{\tau}}\right).
  \label{eq:multi-marginal}
\end{equation}
For joint events, bivariate and trivariate normal CDFs are evaluated
via inclusion--exclusion on the hyperrectangle defined by the relevant
threshold intervals.
Specifically, for a pair $(i, j)$,
\begin{equation}
  P(X_i = k,\, X_j = l \mid \mathbf{S}_t)
  = \sum_{a \in \{k-1, k\}} \sum_{b \in \{j-1, j\}}
    (-1)^{a+b}\,
    \Phi_2\!\left(\frac{\tau_i^a - s_i}{\sigma\sqrt{\tau}},\;
                   \frac{\tau_j^b - s_j}{\sigma\sqrt{\tau}};\;
                   \rho_{ij}\right),
  \label{eq:multi-joint}
\end{equation}
and analogously for 3-way subsets using the trivariate normal CDF $\Phi_3$.

\subparagraph{Categorical parlays.}
A categorical parlay is a pair $(S, \mathbf{c})$ where $S \subseteq [M]$ is a subset
of markets and $\mathbf{c} = (c_i)_{i \in S}$ is a target outcome vector.
The parlay pays if and only if $X_i = c_i$ for every $i \in S$.
Parlay books remain two-state LMSRs (YES/NO), with the true price given by
$P^*_{S,\mathbf{c}} = P(\bigcap_{i\in S}\{X_i = c_i\} \mid \mathbf{S}_t)$.
We generate all 2- and 3-market categorical parlays, capped at 300 per leg count.
Trade-level weights are set to $50\%$ base, $30\%$ 2-leg, and $20\%$ 3-leg.

\subparagraph{Market maker: Potts MRF.}
The binary Ising model generalizes naturally to the \emph{Potts MRF}:
\begin{equation}
  P_\varphi(\mathbf{x}) \propto \exp\!\left(
    \sum_{i=1}^M \theta_i[x_i]
    + \sum_{i < j} W_{ij}[x_i, x_j]
  \right),
  \label{eq:potts}
\end{equation}
where $\theta_i \in \mathbb{R}^{K_i}$ are bias vectors
and $W_{ij} \in \mathbb{R}^{K_i \times K_j}$ are pairwise interaction matrices.
Identifiability is enforced by fixing $\theta_i[0] = 0$ and
$W_{ij}[0, :] = W_{ij}[:, 0] = \mathbf{0}$ (first category as reference).
The parameter count scales as $O\!\left(\sum_i K_i + \sum_{i<j} K_i K_j\right)$,
which is $O(K^2 M^2)$ for uniform $K_i = K$---polynomial in $M$ and $K$.
The SGD update rule applies unchanged with $\nabla_\varphi$ taken
with respect to the Potts parameters.
For base markets we use a $K_i$-state LMSR with cost function
$C(\mathbf{s}) = b\log\sum_k \exp(s_k/b)$ and price vector $\mathrm{softmax}(\mathbf{s}/b)$.

\subparagraph{Jump variant.}
The third configuration in \prettyref{fig:multinomial-key-stats} additionally overlays a Poisson jump process
on the score paths: at each time step, each market $i$ independently draws a
Poisson count $N_i \sim \mathrm{Poi}(\lambda)$ with $\lambda = 0.01$, and on each
jump the score is relocated to one of market $i$'s finite thresholds chosen uniformly
at random.
This mimics discontinuous information arrivals (e.g., injury announcements or
score shocks) and tests robustness of the Potts MRF update to sudden distributional
shifts.

\subparagraph{Simulation parameters.}
All experiments use $T = 1$, $300$ time steps ($dt = 1/300$), $b = 10$,
$\eta_\theta = \eta_W = 0.2$, $200$ independent simulation runs,
and zero noise-trader fraction.
Exact Potts inference is used when the total state space satisfies
$\prod_i K_i \leq 4096$; loopy belief propagation is used otherwise.

\section{Historical-Market Replay Details}
\label{app:historical_replay}

This appendix summarizes the implementation details of the historical-market replay
used in Section~6.

\subsection{Execution Settings}

We study two execution settings. In the \emph{full-market emulation}, base markets are
modeled as externally operated LMSRs and parlay prices are built on top of those market
states. The signal extracted from each trade is its \emph{target price}: we assume that
an informed trader moves the market to the price implied by her information, and the AMM
updates from that post-trade price. In the \emph{standalone market-maker emulation},
base markets remain operated by the platform and parlay trades are executed directly
against a shared liquidity pool. In this case, the informative signal is the trader's
\emph{order flow} - trade direction and share quantity - rather than the final market
price.

\subsection{Data and Model Setup}

We run the historical-market emulation on Kalshi's NBA slate for 2026-03-07. Learning
rates are set to $\eta_\theta = 0.1$ and $\eta_W = 0.01$, and the LMSR liquidity
parameter is $b = 10{,}000$. On this slate, Kalshi lists 6 NBA games and 732 available
base markets across nine series, including moneyline, spread, total, and player-prop
contracts. These markets partition into 6 event-key clusters, one per game, with cluster
sizes ranging from 73 to 135 markets. We match 5,918 listed combo tickers whose legs lie
entirely in this universe, spanning 2- through 10-leg parlays, and retain executed RFQ
trades only. 

We train one Ising model per cluster. Because these clusters contain many markets,
inference is performed with loopy belief propagation rather than exact enumeration.
Each model is initialized from the first 30 candle mid-prices of its constituent legs, with
\[
\theta_i = \mathrm{logit}\!\bigl(\mathrm{median}_t\, p_{it}\bigr),
\qquad
W_{ij}=0.
\]
The merged stream of 1-minute candle updates and parlay trades is then replayed in
chronological order. Candle updates train the local fields $\theta$, while parlay trades
train both $\theta$ and $W$ using the trade-implied target price. 

\subsection{Fee-Bias Correction}

Observed Kalshi execution prices are ask quotes rather than direct probability assessments,
and therefore include a platform markup. Writing
\[
p_{\mathrm{obs}}(A) = P^\ast(A) + \varepsilon(A), \qquad \varepsilon(A)\ge 0,
\]
training directly on $p_{\mathrm{obs}}$ would bias the learned marginals and couplings upward.
We therefore preprocess each observed trade into a canonical all-YES target using the
complement rule and inclusion--exclusion. For mixed YES/NO parlays, the additive fee cancels
in the alternating expansion, yielding an unbiased target in the model's native
parameterization. Empirically, this preprocessing is essential: without it, parlay loss
remains flat and revenue stays lower. 

\subsection{Replay Pipeline and Evaluation}

The emulation replays 3,710 candle updates and 5,436 valid executed parlay trades in
chronological order. In the full-market emulation, the observed Kalshi execution price
$p_{\mathrm{exec}}$ is treated as the trader's target price, and the corresponding LMSR
share bundle $(\Delta s_{\text{yes}}, \Delta s_{\text{no}})$ is computed to move the
current quote to $p_{\mathrm{exec}}$. The pool records revenue
$C(s+\Delta s)-C(s)$, and the same event is then used to update the Ising model and execute
shadow trades on the LMSR state. 

At each executed parlay trade, we also compute the pre-trade independence-based quote
$\hat{\theta}$ and the correlation-aware quote $\theta^{\mathrm{corr}}$, and compare them to
the realized Kalshi execution price $p_{\mathrm{exec}}$. We adopt a conservative trader
model in which the trader's reservation price is exactly $p_{\mathrm{exec}}$, so a quote is
accepted only if $\theta^{\mathrm{corr}} \le p_{\mathrm{exec}}$; any quote above the observed
execution price is treated as rejected. 

Table~\ref{tab:pool_sim} evaluates three strategies: Kalshi, Independence,
Stand-alone Market Maker, and Full AMM. For the Stand-alone Market Maker,
pool capital is defined by the maximum drawdown of cumulative premium minus payout; for
Full AMM, it is the sum of per-market LMSR losses. ROI is net PnL divided by required
pool capital, and Sharpe is computed from 5 1-hour event-time portfolio returns.

\end{document}